\documentclass[journal]{IEEEtran}
\IEEEoverridecommandlockouts
\usepackage{graphicx}
\usepackage{amsmath,amsthm,amsfonts,amssymb}
\usepackage{enumerate}
\usepackage{cite}
\usepackage{bm}
\usepackage{bbm}
\usepackage{url}
\usepackage{array}
\usepackage{color,soul}
\usepackage{subcaption} 
\usepackage{enumitem}
\usepackage{romannum}
\usepackage{amssymb}

\usepackage{multirow}
\usepackage{booktabs}

\theoremstyle{plain}
\newtheorem{thm}{Theorem}

\newtheorem{prop}[thm]{Proposition}

\newtheorem{rem}{Remark}

\allowdisplaybreaks[4]

\begin{document}
\title{Accelerating Heterogeneous Agent Collaboration in Dynamic Edge Networks}

\author{
Tianji~He,
Yulin~Shao,
Fen~Hou

\thanks{T. He and F. Hou are with the State Key Laboratory of Internet of Things for Smart City, University of Macau, Macau, China (e-mails: tianjihe296@gmail.com, fenhou@um.edu.mo). T. He is also with the Department of Electrical and Computer Engineering, The University of Hong Kong, China.}
\thanks{Y. Shao is with the Department of Electrical and Computer Engineering, The University of Hong Kong, Hong Kong, China (e-mail: ylshao@hku.hk).}
}

\maketitle
\thispagestyle{empty}

\begin{abstract}
Deploying large language models (LLMs) at the network edge is hindered by their enormous cost, yet the reasoning quality they provide remains indispensable. Heterogeneous collaboration between edge small models and a server LLM has emerged as a promising direction, but existing methods fail under the dynamic conditions of multi-user contention, autoregressive generation, and time-varying resources.
This paper puts forward a process reward model (PRM)-aided two-stage decoupled acceleration (PRADA) framework, which is built on a fundamental change of perspective: instead of querying a PRM online, which cripples multi-user systems with prohibitive latency, we use the PRM solely as an offline teacher. Its reasoning-quality intuition is fully distilled into a lightweight policy that screen each step locally, without any context upload, while a Lagrangian scheduler at the server resolves resource contention through a threshold-structured policy.
Across diverse reasoning benchmarks, PRADA retains the vast majority of the LLM's accuracy while substantially reducing end-to-end latency. 
The results further reveal threshold effects for both server parallel capacity and total bandwidth: performance saturates beyond critical resource levels, after which the system bottleneck shifts from queuing to computation or from communication to contention. These structural findings provide actionable guidance for joint provisioning of computation and communication resources without requiring per-benchmark tuning.
\end{abstract}

\begin{IEEEkeywords}
Heterogeneous agent collaboration, dynamic edge network, process reward model, decoupled scheduling.
\end{IEEEkeywords}

\section{Introduction}
\subsection{Background}\label{sec:background}
The rapid evolution of large language models (LLMs) has unlocked unprecedented reasoning capabilities~\cite{brown2020language,wei2022chain,lewis2020retrieval,schick2023toolformer}, yet deploying them pervasively at the edge remains a formidable challenge due to their enormous computational and communication demands \cite{qu2025mobile,shao2021federated}. Simultaneously, modern edge devices are increasingly equipped with capable, albeit smaller, language models (SLMs) that can perform inference locally with very low latency~\cite{kang2023knowledge,liu2024ddk,zhou2024jiuzhang3,lu2025demystifying}. This naturally motivates a heterogeneous agent collaboration paradigm: a central server hosting a powerful LLM cooperates with multiple edge users, each carrying a local SLM, to process complex reasoning tasks. By intelligently deciding which generation steps should be handled locally and which should be offloaded to the server, such a collaborative system can, in principle, approach the reasoning quality of the LLM while maintaining much of the speed and privacy advantage of edge-only execution.

Despite this promise, realizing efficient collaboration in realistic \emph{dynamic} edge environments raises three fundamental obstacles that existing approaches largely overlook.
\begin{itemize}[leftmargin=0.5cm]
    \item \textit{Dynamic, multi-step task nature}: Reasoning tasks unfold through autoregressive chain-of-thought generation, where each step produces a segment of the final answer and the accumulated context grows over time. The decision made at any single step not only impacts the immediate quality and latency but also changes the cost of subsequent decisions.
    \item \textit{Multi-user resource coupling}: In a dynamic network with stochastically arriving users, all active devices compete for the same finite resources, such as the server's maximum parallel processing units and the total communication bandwidth. An offloading decision for one user inevitably alters queuing states and bandwidth availability for all others, making the scheduling problem tightly coupled across tasks and time.
    \item \textit{The quality-latency trade-off}: Using the server LLM improves reasoning accuracy but incurs a triple delay penalty: communication delay to upload the ever-growing context, queuing delay due to limited server concurrency, and the server's own computation delay. Conversely, executing a step on the local SLM avoids these penalties but risks lower-quality outputs. Balancing this trade-off in a time-varying environment under strict resource constraints is the core challenge.
\end{itemize}

\subsection{Related Work}
Existing studies on collaboration between small and large models largely focus on single-user or single-request settings, where the primary goal is to improve inference efficiency while preserving the reasoning quality of the large model. Consequently, most methods are designed for local collaboration within a single reasoning chain and do not address dynamic multi-user systems with shared and limited server resources. These methods can be broadly categorized by their decision granularity.

\subsubsection{Problem-level routing}
The coarsest category makes a single routing decision for an entire problem. RouteLLM~\cite{ong2025routellm} learns a router to choose between strong and weak models, while FrugalGPT~\cite{chen2024frugalgpt} implements an adaptive cascade of LLMs to reduce cost. Such approaches are simple and easy to deploy, but because they commit to one model for the whole task, they cannot exploit step-varying difficulty or switch models mid-generation, leading to a rigid accuracy-efficiency trade-off on complex reasoning tasks.

\subsubsection{Step- and token-level collaboration}
To introduce finer control, later works operate at the level of individual reasoning steps or even tokens. Early step-level methods did not use an explicit process reward model (PRM); instead, they relied on local signals such as token confidence, entropy, or immediate consistency checks to decide when to invoke a stronger model. While better aligned with the internal structure of reasoning than problem-level routing, these approaches often suffer from myopic decisions because they lack a forward-looking estimate of how a partial reasoning state will affect final outcomes. Token-level methods, exemplified by speculative decoding schemes such as HSL~\cite{DBLP:conf/edgefm/HaoJJ0C24}, HLM~\cite{11140540}, and CITER~\cite{zheng2025citer}, push granularity even further by switching models on a per-token basis. They can achieve substantial speedups, but their decision mechanisms remain driven by local uncertainty or verification signals and do not consider the global resource context or long-horizon quality consequences.

\subsubsection{PRM-guided step-level methods}
More recent step-level methods incorporate a PRM to overcome the shortsightedness of local heuristics. A PRM evaluates not only the current partial output but also its potential to lead to a correct final answer, thus providing a more forward-looking training signal. Representative works include RSD~\cite{liao2025rewardguided}, which uses PRM-based rewards to guide speculative reasoning, and G-Boost~\cite{fan2025g}, which leverages PRM signals to steer search over collaboration paths. These methods better capture long-range reasoning quality and, in principle, produce more informed collaboration decisions. However, this advantage is accompanied by a critical drawback in resource-constrained multi-user settings: the PRM itself is typically as large as the LLM, and invoking it for every step of every user introduces prohibitive online computation and memory overhead, exactly the opposite of what acceleration aims to achieve.

This observation directly motivates a key design principle of our work. Instead of deploying the PRM as an online component, we use it exclusively during offline training to provide dense reward supervision and distill its forward-looking evaluation capability into an extremely lightweight decision network. At deployment time, each edge SLM makes a binary preliminary decision, i.e., whether the current step should remain local or be submitted to the server scheduler, using a network with only a few hundred thousand parameters, thereby completely eliminating PRM inference latency from the online loop.

\subsubsection{Multi-agent homogeneous collaboration}
Besides collaboration between small and large models, multi-agent collaboration has also become an important direction for improving performance on complex tasks \cite{cui2024llmind,luo2026cayley}. These methods usually enhance output quality through division of labor, discussion, mutual critique, voting, or role-based cooperation among multiple agents~\cite{lowe2017multi,tampuu2017multiagent,shao2024theory}. However, they typically assume agents with relatively comparable roles and focus on interaction protocols \cite{qian2024scaling}, rather than on capability asymmetry and resource-constrained scheduling. By contrast, the setting considered in this paper is highly asymmetric: the server-side LLM offers stronger reasoning quality but incurs higher latency and cost, whereas the edge-side SLM is faster but less capable. Therefore, the problem studied here is not collaboration among homogeneous peers, but efficient collaboration and scheduling among heterogeneous agents in a dynamic multi-user edge network.

\subsection{Our Contributions}
While prior works have demonstrated the importance of efficient collaboration between small and large models, they are largely designed around a single reasoning chain and a static resource assumption. Once the system is placed in a practical dynamic edge inference system, none of them can handle the combination of challenges highlighted in Section \ref{sec:background}. 
To fill this gap, this paper puts forward a PRM-aided two-stage decoupled acceleration (PRADA) framework. Our main contributions are summarized as follows:

\begin{itemize}[leftmargin=0.5cm]
\item We provide the first principled formulation of heterogeneous agent collaboration in dynamic edge networks as a constrained sequential decision problem. Our formulation explicitly captures stochastic user arrivals, per-step chain-of-thought generation with growing context, and coupled competition for limited server concurrency and communication bandwidth. Crucially, we build a unified latency model that expresses all sources of delay, inclduing computation, communication, and queuing, in a single, real-time metric. The computation component is grounded in the actual LLM inference architecture: it separately accounts for the prefill and decode phases using FLOPs-level characterization, precisely capturing the strong dependence of both the server-side and local inference time on context length and the number of generated tokens. This rigorous modeling avoids the coarse heuristics prevalent in prior work and enables consistent optimization of a composite objective that trades off reasoning quality against total end-to-end latency through a single, interpretable parameter.

\item Building on the formulation, we design the PRADA framework, which rests on two essential and complementary ideas. The first is a two-stage architecture that decouples the global decision problem. At the edge, a compact screening network makes per-user, per-step binary nominations without uploading any context; only the nominated candidates are forwarded to the server, reducing the candidate set by an order of magnitude and eliminating unnecessary communication. At the server, a Lagrangian-based scheduler resolves the final actions under real-time resource constraints, yielding a threshold-structured scheduling policy and a closed-form rule for bandwidth allocation. The second idea is to use the PRM exclusively as an offline teacher. Instead of enduring its prohibitive online inference cost, we distill its forward-looking quality assessment into the lightweight screening policy during training. The reward model is never invoked at deployment time, yet the resulting policy preserves its ability to judge whether a step is likely to benefit from stronger reasoning. We further prove that, under mild conditions, a first-stage decision to stay local remains optimal even after communication and queuing penalties are added back, guaranteeing that valuable offloading candidates are never prematurely discarded.

\item Extensive simulations across several reasoning benchmarks reveal structural properties of the system. We show that PRADA consistently retains most of the accuracy gains achievable by always using the large model while substantially reducing end-to-end delay. More importantly, we uncover clear threshold effects for both the server parallel processing capacity and the total communication bandwidth. As either resource increases, accuracy and latency improve rapidly up to a critical point, after which the gains saturate and the system bottleneck shifts from one resource to another, for example, from queuing to computation, or from communication to server contention. These findings provide concrete, actionable guidance for jointly provisioning computation and communication resources to achieve a desired operating point on the quality-latency frontier, without requiring per-benchmark tuning or detailed prior knowledge of the task mix.
\end{itemize}

\section{System Model}
\label{sec:system_model}

We consider a mobile edge network consisting of a central server and multiple users, organized in a star topology as illustrated in Figure~\ref{fig:system_model}. The server hosts a powerful LLM, while each user is equipped with a local SLM. These two types of models possess fundamentally different capabilities and costs: the SLM is lightweight and fast but has limited reasoning ability, whereas the LLM provides stronger reasoning quality at the cost of higher computational latency. In this sense, the server LLM and edge SLMs form a team of heterogeneous agents, where heterogeneity manifests not only in their reasoning performance but also in their execution latency, resource consumption, and communication overhead.

\begin{figure}[t]
    \centering
    \includegraphics[width=0.5\textwidth]{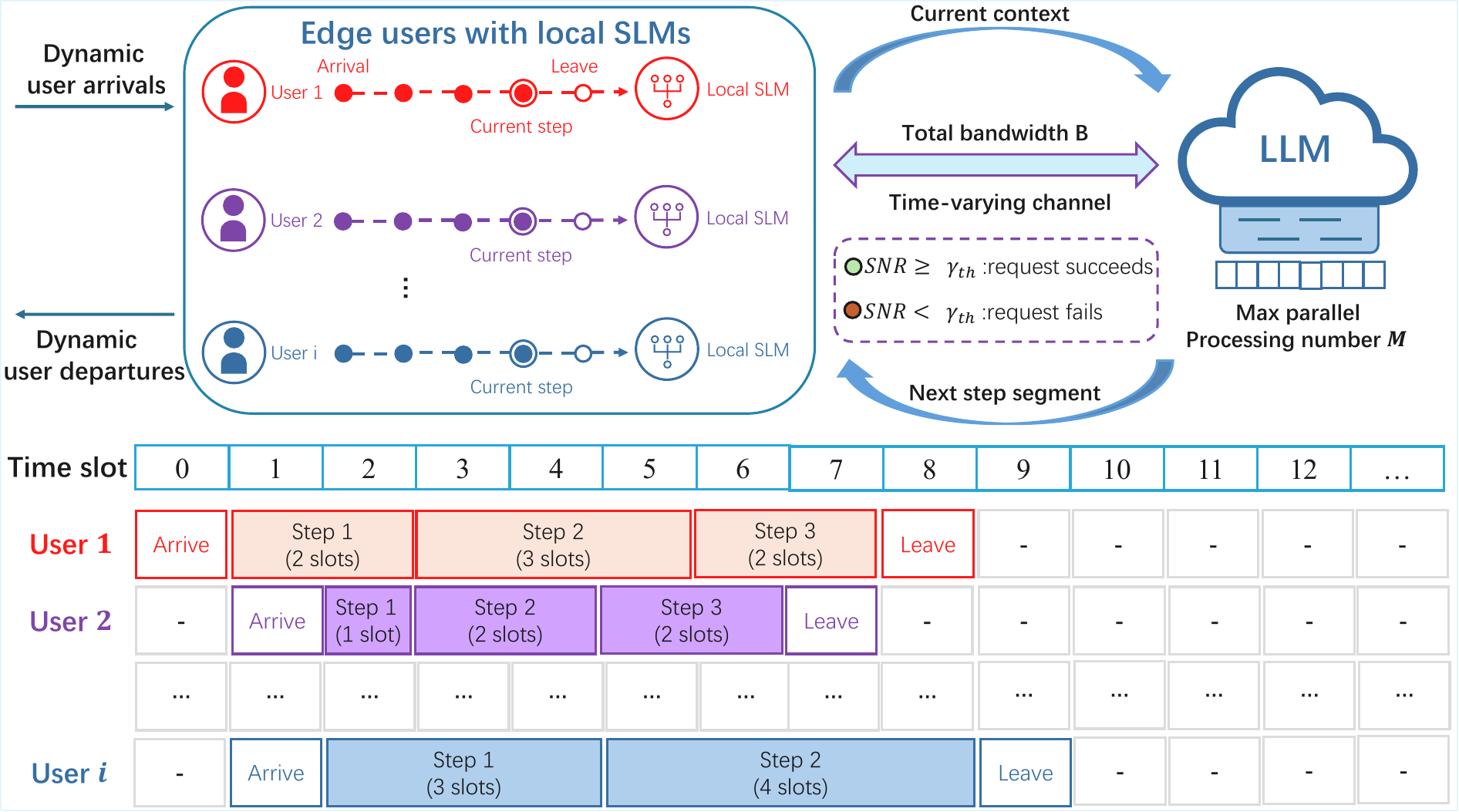}
    \caption{System model of dynamic heterogeneous agent collaboration in edge networks.}
    \label{fig:system_model}
\end{figure}

Users arrive dynamically over time. Each user comes with a reasoning task (e.g., planning a travel itinerary, solving a math problem, or answering a complex question) and leaves the network once its task is completed. The total communication bandwidth between users and the server is limited to $B$. The server can process at most $M$ requests in parallel, forming a shared resource bottleneck. The dynamic nature of user arrivals, together with the heterogeneity of the two agent types, makes the system inherently time-varying and poses a core challenge for efficient collaboration.

In this dynamic edge network, a naive approach would forward every user request directly to the server LLM. However, this quickly becomes infeasible because of the limited server concurrency and communication bandwidth. In addition, users join and leave stochastically,
and the wireless channel SNR varies over time, making static scheduling suboptimal. Thus, accelerating collaboration among heterogeneous agents requires a mechanism that adaptively decides, at each generation step, whether to use the local SLM (fast but weaker) or to offload to the server LLM (stronger but slower), while explicitly accounting for resource contention among multiple users. The goal is to achieve high reasoning accuracy with low overall latency.

\subsection{System Slots and Reasoning Steps}

To capture the system dynamics, we introduce two time scales: a fine-grained system slot for scheduling decisions and a coarser-grained reasoning step for task progression. This distinction is crucial for accurately modeling both the multi-user resource contention and the internal state evolution of individual tasks.

\subsubsection{System slots}
We divide time into discrete slots indexed by $t = 0,1,2,\dots$, and model the number of new users arriving in slot $t$ (i.e., the number of new tasks initiated in slot $t$) by a Poisson process:
\begin{equation}
    n_t \sim \mathrm{Poisson}(\lambda),
    \label{eq:poisson_arrival}
\end{equation}
where $\lambda$ is the average arrival rate. Each user enters the system together with its task request and remains until the task is completed. 

Because new users may arrive at every slot, the system must perform decision-making at the slot level. Specifically, the system must determine, at each slot, whether a request should remain at the local SLM or be served by the edge LLM. 
As users enter and depart, the set of active tasks evolves over time, causing the server workload and communication demand to fluctuate. This is a realistic feature that any practical scheduler must explicitly handle.

\subsubsection{Reasoning steps}
For each user, a reasoning task is not processed in a single monolithic pass. Consistent with the autoregressive nature of chain-of-thought prompting, each task unfolds progressively over a sequence of discrete reasoning steps, indexed by $k = 0,1,2,\dots$. Each step corresponds to the generation of a coherent segment of the response, such as a sentence or a reasoning fragment. The execution of a single reasoning step takes a non-zero amount of time and may therefore span multiple system slots. The duration of a step depends on the chosen execution target (local SLM or server LLM) and the current context length.

For each user/task $i$, we denote its initial query by $q_i$. The task proceeds through a series of reasoning steps and terminates when an end-of-sequence token is generated or a maximum step limit is reached. Let $y_{i}^{(k)}$ be the content generated during the $k$-th step. The {context} available at the beginning of step $k$ is the concatenation of the initial query and all previously generated content:
\begin{equation}
x_{i}^{(k)} =
\begin{cases}
q_i, & k = 0, \\
q_i \oplus y_{i}^{(0)} \oplus y_{i}^{(1)} \oplus \cdots \oplus y_{i}^{(k-1)}, & k \geq 1,
\end{cases}
\label{eq:context_def}
\end{equation}
where $\oplus$ denotes string concatenation. As generation proceeds, the context $x_{i}^{(k)}$ grows in length, which not only influences the difficulty of the current reasoning step but also increases the volume of data that must be transmitted if the step is offloaded to the server.

\subsubsection{Interleaving of the two scales}
The two time scales are interleaved as follows. 
\begin{itemize}[leftmargin=0.5cm]
    \item At any system slot $t$, a user is said to be \textit{active} if it is ready to initiate a new reasoning step. An active user competes for system resources (e.g., bandwidth, server admission) and requires a scheduling decision.
    \item Once a decision is made and the step begins execution, the user becomes \textit{inactive} for the duration of that step's execution. During this inactive period, the task's internal context remains static, and it does not participate in any resource contention.
    \item Upon completion of the current step, the task transitions back to the active state, its context is updated with the newly generated content, and it becomes eligible for scheduling again in the next system slot.
\end{itemize}   

Consequently, the set of users competing for resources in slot $t$, denoted by $\mathcal{U}_t$, is composed of two distinct groups:
\begin{equation}
\mathcal{U}_t = \mathcal{U}_t^{\text{new}} \cup \mathcal{U}_t^{\text{comp}},
\label{eq:active_users}
\end{equation}
where $\mathcal{U}_t^{\text{new}}$ is the set of users newly arrived in slot $t$, and $\mathcal{U}_t^{\text{comp}}$ is the set of existing users whose previous reasoning step was completed just before slot $t$ begins. This formulation explicitly captures the fact that the set of contending users evolves dynamically, driven by both exogenous arrivals and endogenous task progress.

\begin{rem}[An illustrative example]
    Figure~\ref{fig:slot_step} provides a concrete illustration of the interplay between slots and steps for a single task. In this example, the task arrives in slot $t=1$, becomes active, and is scheduled. Its first step ($k=0$) executes over slots $t=1$ and $t=2$, during which the task is inactive. The step completes at the end of slot $t=2$. The task then becomes active again at slot $t=3$ for its second step ($k=1$), and the cycle repeats.
\end{rem}

\begin{figure}[t]
    \centering
    \includegraphics[width=0.5\textwidth]{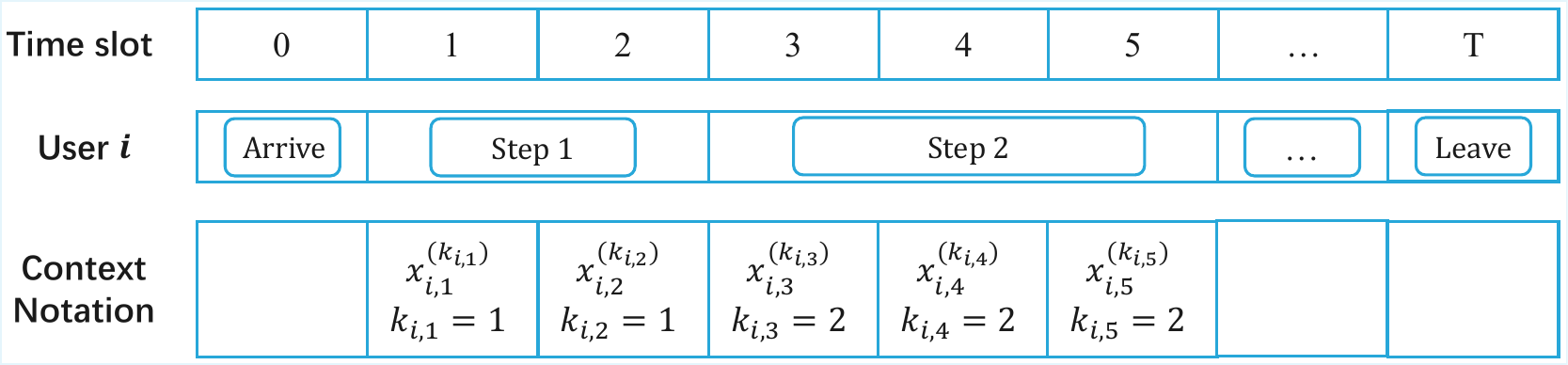}
    \caption{Illustration of the interplay between slots and steps.}
    \label{fig:slot_step}
\end{figure}

To unambiguously refer to the context of an active task $i$ at slot $t$, we adopt the following notation.
\begin{itemize}[leftmargin=0.5cm]
    \item Let $k_{i,t}$ denote the reasoning-step index that task $i$ is about to execute when it becomes active in slot $t$.
    \item The context associated with this decision epoch is then denoted by $x_{i,t}^{(k_{i,t})}$.
    \item For brevity, when the step index is clear from context, we may simply write $x_{i,t}$ with the understanding that it refers to the appropriate $x_{i,t}^{(k_{i,t})}$.
\end{itemize} 

During the inactive periods, the task does not require a context reference in our scheduling formulation, so no ambiguity arises.

\subsection{Latency Decomposition}
For an active task, the total latency incurred to complete its upcoming reasoning step depends on the chosen execution path. We decompose this latency into three major components: communication delay, queuing delay, and computational delay.

\subsubsection{Communication latency} 
When a step is offloaded to the server, the current context $x_{i,t}$ must be transmitted over the uplink channel. Let $\gamma_{i,t}$ denote the instantaneous signal-to-noise ratio (SNR) for user $i$ at slot $t$. A transmission is considered feasible only if $\gamma_{i,t} \ge \gamma_{\text{th}}$, where $\gamma_{\text{th}}$ is a predefined threshold for reliable communication. In practice, the
downlink is typically more reliable due to higher transmit power and beamforming at the base station; we therefore assume downlink transmission is error-free.

To model the communication delay more precisely, we characterize the uplink channel capacity. For task $i$ at slot $t$, if the channel is available, the achievable data rate is given by the Shannon capacity:
\begin{eqnarray}\label{eq:shannon}
&&\hspace{0.3cm} R_{i,t} = B_{i,t} \log_2 \left(1 + \gamma_{i,t}\right), \\
&&\hspace{0.4cm} 
\sum_{i \in \mathcal{A}_t} B_{i,t} \leq B, \notag
\end{eqnarray}
where $B_{i,t}$ is the bandwidth allocated to this transmission, subject to the total bandwidth constraint; $\mathcal{A}_t \subseteq \mathcal{U}_t$ denotes the subset of active tasks that are transmitting in slot $t$.

The size of the context to be transmitted is proportional to its length in tokens. Let $\overline{L}_{i,t}$ be the number of tokens in context $x_{i,t}$. With a tokenizer that encodes each token as a 32-bit sequence, the size in bits is $L_{i,t} = 32 \overline{L}_{i,t}$. The communication delay for offloading this step is therefore
\begin{equation}
\mathrm{TC}_{i,t} = \frac{L_{i,t}}{R_{i,t}}.
\label{eq:comm_delay}
\end{equation}
Note that the communication burden is step-dependent, as the context length (and thus the required transmission time) grows as the task progresses.
Therefore, even for the same user, the cost of communication evolves over time. When multiple users simultaneously attempt to access the server, this time-varying communication demand further increases the complexity of system coordination.

\subsubsection{Queuing and computational latency} 
The server can process at most $M$ requests in parallel. When an offloading request arrives and all $M$ processing budgets are occupied, the request must wait in a server queue. Let $\mathrm{TQ}_{i,t}$ denote the queuing delay experienced by task $i$'s request upon arrival at the server queue in slot $t$. This delay is a function of the remaining service times of the tasks currently in service and the number of requests already waiting.

Once a request is admitted for processing, the LLM requires time $\mathrm{TL}_{i,t}$ to generate the next reasoning segment. Conversely, if the step is executed locally, the SLM requires time $\mathrm{TS}_{i,t}$. These computational delays are not fixed constants; they depend on the current context length and the model architecture. A detailed characterization of these delays, accounting for the prefill and decode stages of transformer inference, is provided in Section~\ref{sec:prada}.

For a step executed locally in slot $t$, the latency is simply $\mathrm{TS}_{i,t}$. For an offloaded step, the total latency comprises all three components: $\mathrm{TL}_{i,t} + \mathrm{TC}_{i,t} + \mathrm{TQ}_{i,t}$.

We emphasize that these latency components are tightly coupled across users. A decision to offload a particular step not only affects the completion time of the current task but also alters the queuing state and bandwidth availability for all other active tasks in subsequent slots. The interplay of local computation, wireless transmission, and shared server resources under dynamic conditions renders the global scheduling problem highly complex and time-varying.

\section{The Quality-Latency Trade-off}
\label{sec:globle_formulation}

At the heart of heterogeneous agent collaboration lies a fundamental trade-off: invoking the server LLM improves reasoning quality but incurs communication, queuing, and computation delays, whereas relying on the local SLM reduces latency at the potential cost of accuracy. This tension is further amplified by the sequential nature of each task. A task unfolds over multiple reasoning steps, where the quality of later steps depends on earlier correctness, and each decision alters the context length, and thus the cost of future offloading. Hence, the central question is how to design an online policy that continuously balances quality and latency under uncertainty and resource coupling.

To evaluate any such policy in a principled way, we adopt a Lagrangian relaxation perspective. Let $U$ denote a task-level quality metric, and let $\Delta$ be the average end-to-end latency across all users. We introduce a non-negative parameter $\beta \ge 0$ that quantifies the relative penalty on latency. The system utility of a policy is then defined as the maximization objective:
\begin{equation}
\max \bigl( U - \beta \Delta \bigr).
\label{eq:system_objective}
\end{equation}
A larger $\beta$ forces the policy to favor low latency (more local steps), while a smaller $\beta$ encourages quality (more offloading). This scalar utility provides a common ground for comparing different policies across the entire Pareto frontier of the quality-latency trade-off. However, directly optimizing this objective is infeasible in practice due to the enormous state space and the coupling constraints. The remainder of this section formalizes the problem as a global Markov Decision Process (MDP), discusses why solving it exactly is intractable, and then gives an overview of our proposed approach, which efficiently approximates the optimal trade-off.

\subsection{The Global MDP}

We formalize the heterogeneous agent collaboration problem as a global MDP. This formulation serves two purposes: it provides a unified mathematical description of the system dynamics, and it explicitly exposes the coupling that makes direct optimization intractable, thereby motivating our solution.

\subsubsection{State space}
At the beginning of each system slot $t$, the system observes a global state $\mathbf{s}_t$. For each active user $i \in \mathcal{U}_t$, the local state component comprises the current context $x_{i,t}^{(k_{i,t})}$ and the instantaneous channel SNR $\gamma_{i,t}$. 
On the server side, let $\mathcal{M}_t$ denote the set of requests currently being processed by the LLM, with $|\mathcal{M}_t| \le M$. For each in-service request $j \in \mathcal{M}_t$, we maintain its remaining service time $\Gamma_{j,t}$, which can be predicted from the context length and the number of tokens yet to be generated. 
Finally, let $\mathcal{Q}_t$ represent the set of requests waiting in the server queue. The global state is thus
\begin{equation}\label{eq:global_state_new}
\mathbf{s}_t = \Bigl( \{x_{i,t}^{(k_{i,t})}, \gamma_{i,t}\}_{i\in\mathcal{U}_t},\, \{\Gamma_{j,t}\}_{j\in\mathcal{M}_t},\, \mathcal{Q}_t \Bigr).
\end{equation}

\subsubsection{Action space}
For each active user $i$, the system must choose a task-level action $\alpha_{i,t}$, where
\begin{equation}\label{eq:task_level_action_cases}
\alpha_{i,t}
=
\begin{cases}
0, & \text{processed locally by the SLM},\\[1mm]
1, & \text{offloaded to the server queue},\\[1mm]
2, & \text{admitted for immediate LLM execution}.
\end{cases}
\end{equation}

When a request is admitted for immediate execution ($\alpha_{i,t}=2$), the scheduler must also allocate a portion $B_{i,t}$ of the total bandwidth budget $B$ for transmitting the current context $x_{i,t}^{(k_{i,t})}$. The joint action in slot $t$ is therefore
\begin{equation}\label{eq:joint_action}
\boldsymbol{\alpha}_t
=
\Bigl(
\{\alpha_{i,t}\}_{i\in\mathcal{U}_t},\,
\{B_{i,t}\}_{i\in\mathcal{A}_t}
\Bigr),
\end{equation}
where $\mathcal{A}_t=\{i\in\mathcal{U}_t\mid \alpha_{i,t}=2\}$ is the set of requests admitted for immediate execution. The feasible action set must satisfy
\begin{equation}
|\mathcal{A}_t|\le M-|\mathcal{M}_t|,
\quad
\sum_{i\in\mathcal{A}_t}B_{i,t}\le B.
\label{eq:joint_action_constraints}
\end{equation}

\subsubsection{Trajectory and stepwise latency}
Consider a task $i$ that completes after $K_i$ reasoning steps (i.e., after generating $K_i$ segments). Its trajectory is a sequence of decision-state pairs:
\begin{equation}
\tau_i = \bigl( x_{i,t_0}^{(0)}, \alpha_{i,t_0}, x_{i,t_1}^{(1)}, \alpha_{i,t_1}, \dots, x_{i,t_{K_i}}^{(K_i)} \bigr),
\label{eq:trajectory_new}
\end{equation}
where $t_k$ denotes the system slot at which the decision for step $k$ is made and $x_{i,t_{K_i}}^{(K_i)}$ is the terminal context. The total latency accumulated by task $i$ along this trajectory is
\begin{equation}
\Delta(\tau_i) = \sum_{k=0}^{K_i-1} \delta_{i,t_k},
\label{eq:trajectory_latency_new}
\end{equation}
where $\delta_{i,t_k}$ is the step latency incurred by the decision made at slot $t_k$. Using the delay components defined in Section~\ref{sec:system_model}, we have
\begin{equation}
\delta_{i,t_k}
\!=\!
\begin{cases}
\mathrm{TS}_{i,t_k}, \!& \alpha_{i,t_k}\!=\!0,\\[1mm]
\mathrm{TL}_{i,t_k} + \mathrm{TC}_{i,t_k} + \mathrm{TQ}_{i,t_k}, \!& \alpha_{i,t_k}\!=\!1,\\[1mm]
\mathrm{TL}_{i,t_k} + \mathrm{TC}_{i,t_k}, \!& \alpha_{i,t_k}\!=\!2.
\end{cases}
\label{eq:step_latency_under_h}
\end{equation}

For each trajectory, we assign a return that measures the improvement in reasoning quality according to the objective in \eqref{eq:system_objective}. The global MDP can then be solved by maximizing this per-trajectory return.

\subsubsection{Why direct optimization is intractable}
\label{subsec:difficulties}

Solving the global MDP directly, as a centralized solution, faces several fundamental obstacles.
\begin{itemize}[leftmargin=0.5cm]
    \item First, the joint action space couples heterogeneous decisions such as per-slot binary scheduling (local vs. offload), server admission control (queue vs. immediate execution), and continuous bandwidth allocation. These factors are strongly coupled. A decision made for one user at one generation step may affect not only its own quality and delay, but also the queuing state and bandwidth availability experienced by other users in subsequent slots. This makes exact optimization infeasible. 
    \item Second, even if a heuristic centralized solver were employed, it would necessitate that every active user uploads its complete context $x_{i,t}^{(k_{i,t})}$ to the server at the beginning of each slot purely for the purpose of decision making. Given that context sizes $\overline{L}_{i,t}$ grow with reasoning progress, this overhead would consume a significant portion of the uplink budget $B$ before any actual offloading occurs, defeating the purpose of communication-efficient acceleration.
    \item  Third, the resulting mixed-integer program with time-varying constraints offers poor scalability and high latency. The centralized scheduling quickly becomes computationally intractable as the number of users grows. 
\end{itemize}

These practical difficulties motivate our PRM-aided two-stage decoupled acceleration (PRADA) framework.

\subsection{Key principles of PRADA}

PRADA is underpinned by three core insights: two-stage decoupling, PRM as an offline supervisor, and precise characterization of computation delay.

\subsubsection{Two-stage decoupling}

To addresses the challenges in Section~\ref{subsec:difficulties}, we decouple the global decision into two distinct stages with separate scopes of responsibility:

\textit{Stage 1} (decentralized edge screening): 
At the beginning of slot $t$, each active user $i \in \mathcal{U}_t$ independently runs a lightweight local policy that depends solely on its own current context $x_{i,t}^{(k_{i,t})}$. This policy produces a binary candidate indicator: whether the upcoming step $k_{i,t}$ is a candidate for server offloading or should be processed locally. No coordination or resource negotiation occurs at this stage. The policy is designed to be extremely small, ensuring that its inference overhead is negligible compared to the subsequent generation step.

\textit{Stage 2} (centralized server scheduling): The server collects candidate requests from those users that pass the first-stage screening. 
Let $\mathcal{O}_t \subseteq \mathcal{U}_t$ denote this candidate set. Depending on the server concurrency limit $M$ and and bandwidth budget $B$, $|\mathcal{O}_t|$ is typically much smaller than $|\mathcal{U}_t|$. 
The server then solves a resource-constrained optimization problem to assign final actions to each candidate request, deciding whether to admit it for immediate LLM execution, place it in a waiting queue, or return it for local SLM execution. This results in a low-complexity threshold policy derived from Lagrangian relaxation (detailed later).

This two-stage decoupling offers immediate practical advantages. First, the original coupled, mixed-integer global optimization is replaced by a set of lightweight per-user evaluations and a manageable server-side knapsack-like problem. Second, and more importantly, the system \textit{does not} require all active users to upload their contexts for decision making; only the contexts of the screened candidates are transmitted, and only when they are actually scheduled for offloading. The resulting reduction in uplink traffic is essential for maintaining low communication delay $\mathrm{TC}_{i,t}$.

\subsubsection{PRM as an offline supervisor}

A central challenge in any sequential decision problem is how to evaluate the quality of an action taken at an intermediate step. In our setting, the system must judge whether a particular offloading decision is likely to lead to a correct final answer. To this end, we employ PRMs. A PRM is a neural network that takes an intermediate reasoning state, namely the current context $x_{i,t}^{(k_{i,t})}$, and outputs a scalar score estimating the probability that this state will eventually lead to a correct final answer. PRMs have been used in prior collaborative reasoning works (e.g., RSD \cite{liao2025rewardguided}, G-Boost \cite{fan2025g}) as an online component: at each step, the PRM is invoked to compare the quality of continuing with the SLM versus switching to the LLM.

However, we observe that this online usage of PRM introduces a fundamental inefficiency. The PRM itself is typically as large as the LLM (often several billion parameters), and invoking it at every step for every user adds substantial latency and computational overhead, exactly the opposite of what acceleration aims to achieve. In a multi-user environment where server resources are already constrained by the LLM inference workload, the additional burden of frequent PRM queries would only exacerbate the queuing delay $\mathrm{TQ}_{i,t}$ and compete for the same scarce computational budget.

Our insight is that the per-user decision is binary (stay local or request server). For such a simple binary choice, a very lightweight decision network, perhaps with only a few hundred thousand parameters, is sufficient, provided it receives proper supervision during an offline training phase. Moreover, the computationally expensive PRM can be used as a {teacher} during this offline phase, providing dense reward signals without any impact on online inference latency.

Consequently, PRADA changes the role of the PRM from an online evaluator to an offline supervisor. This design completely eliminates the PRM's inference latency and memory consumption from the online execution path, while still allowing the system to benefit from the PRM's fine-grained semantic quality assessment during training. The result is a lightweight, fast, yet accurate edge decision mechanism that is ideally suited for the decentralized first stage.

\subsubsection{Precise characterization of computation delay}

A prerequisite for optimizing the trade-off in \eqref{eq:system_objective} is the ability to express all relevant costs in a common unit. While communication delay $\mathrm{TC}_{i,t}$ and queuing delay $\mathrm{TQ}_{i,t}$ are naturally measured in seconds, the computational work performed by SLMs and LLMs is most accurately quantified in floating-point operations (FLOPs). Existing approaches to collaborative inference often handle this discrepancy in an ad-hoc or imprecise manner. For instance, Hybrid SLM-LLM (HSL)~\cite{DBLP:conf/edgefm/HaoJJ0C24} approximates computational delay using only the model size, assigning a fixed delay cost of $10$ to a $10$B model and $1$ to a $1$B model without accounting for the variability induced by context length or the number of generated tokens. Other frameworks, such as RSD~\cite{liao2025rewardguided} and Hierarchical Language Model (HLM)~\cite{11140540}, do not explicitly incorporate latency into their inference-time decisions; instead, they aim to reduce latency indirectly by sacrificing a portion of the achievable accuracy.

In contrast, PRADA adopts a rigorous, context-aware model of computational delay. This precise characterization is essential because the durations $\mathrm{TL}_{i,t}$ and $\mathrm{TS}_{i,t}$ depend strongly on both the current context length $\overline{L}_{i,t}$ and the number of new tokens $m_{\text{new}}$ to be generated in the current reasoning step. Moreover, an accurate delay model allows the server scheduler to predict the remaining service times of in-flight requests, which is critical for estimating queuing delays $\mathrm{TQ}_{i,t}$.

Modern SLMs and LLMs are predominantly based on the Transformer architecture. As illustrated in Figure~\ref{fig:pre_dec}, their inference process consists of two distinct phases:
\begin{itemize}[leftmargin=0.5cm]
    \item \textit{Prefill phase:} The model processes the entire input context $x_{i,t}^{(k_{i,t})}$ in parallel. During this phase, the key-value (KV) pairs for all input tokens are computed and cached, and the first token of the response is generated.
    \item \textit{Decode phase:} Subsequent tokens are generated one by one in an autoregressive manner. Leveraging the stored KV cache, this phase computes attention only for the newest token, dramatically reducing the per-token computational cost.
\end{itemize}

\begin{figure}[t]
    \centering
    \includegraphics[width=0.5\textwidth]{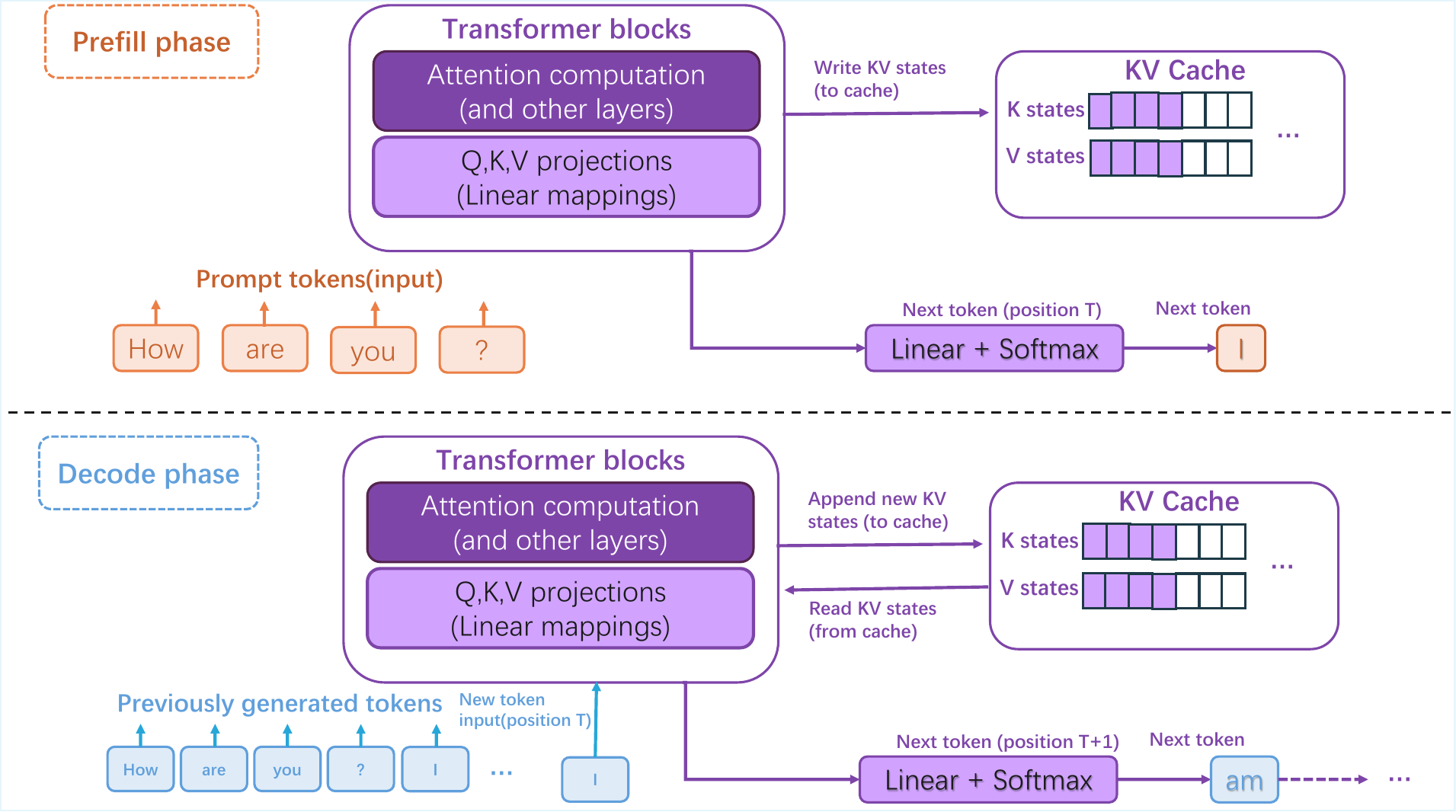}
    \caption{Illustration of the prefill and decode phases in model inference.}
    \label{fig:pre_dec}
\end{figure}

To quantify the required computation, let $m = \overline{L}_{i,t}$ denote the number of tokens in the context $x_{i,t}^{(k_{i,t})}$, let $o$ be the hidden dimension of the model, and let $v$ be the number of decoder layers. The total FLOPs for the prefill phase is well-approximated by
\begin{equation}\label{eq:prefill_cost}
f_{\text{pre}}(m,o,v) = v\left(2m^2o + 2mo + 4mo^2\right).
\end{equation}
In this expression, the term $2m^2o$ accounts for the attention score computation (query-key dot products and value weighting), while $2mo$ and $4mo^2$ correspond to the Softmax normalization and the linear projections, respectively.

After the prefill phase, each subsequent decode step for a single new token incurs
\begin{equation}
f_{\text{dec}}(m,o,v) \approx v\left(o^2 + mo\right),
\label{eq:decode_cost}
\end{equation}
where $o^2$ captures the query projection and feed-forward computation for the new token, and $mo$ accounts for the attention operation against the cached keys and values of length $m$.

In our stepwise framework, a single reasoning step may generate a segment comprising $m_{\text{new}}$ new tokens (e.g., a full sentence or a reasoning fragment). The incremental FLOPs cost for this step, building upon the prefill computation already performed (which is incurred only once per task), is given by
\begin{equation}
f_{\text{step}}(m,o,v,m_{\text{new}}) = m_{\text{new}}o^2 + \frac{o\,m_{\text{new}}(2m + m_{\text{new}} - 1)}{2}.
\label{eq:step_cost}
\end{equation}
This expression captures the fact that the cost of generating $m_{\text{new}}$ tokens depends on both the accumulated context length $m$ and the length of the new segment.

Finally, to convert FLOPs into a time delay, we use the sustained computational capability of the respective hardware. Let $\text{FlopsLLM}$ and $\text{FlopsSLM}$ denote the peak FLOPs per second achievable by the server and the edge devices, respectively. The computational latency for a step executed by the server LLM is
\begin{equation}
\mathrm{TL}_{i,t} = \frac{ f_{\text{step}}(m, o_{\text{L}}, v_{\text{L}}, m_{\text{new}})}{\text{FlopsLLM}},
\label{eq:TL_delay}
\end{equation}
and similarly, for the local SLM,
\begin{equation}
\mathrm{TS}_{i,t} = \frac{f_{\text{step}}(m, o_{\text{S}}, v_{\text{S}}, m_{\text{new}})}{\text{FlopsSLM}}.
\label{eq:TS_delay}
\end{equation}

With this precise characterization, the three principal delay components: communication $\mathrm{TC}_{i,t}$, queuing $\mathrm{TQ}_{i,t}$, and computation $\mathrm{TL}_{i,t}$ or $\mathrm{TS}_{i,t}$, are all expressed in seconds. This unified time metric enables a consistent and principled optimization of the composite objective in \eqref{eq:system_objective}, where the trade-off parameter $\beta$ directly balances the quality gain of using the LLM against the additional latency it incurs.

\section{Two-Stage Decoupled Collaboration}
\label{sec:prada}

As stated in Section~\ref{sec:globle_formulation}, PRADA decomposes the global sequential decision problem into two sequential stages with distinct responsibilities and information requirements, enabling scalable and communication-efficient collaboration between edge SLMs and the server LLM.

\subsection{Decentralized Edge Screening (Stage 1)}

At the beginning of slot $t$, every active user $i \in \mathcal{U}_t$ is about to initiate its next reasoning step $k_{i,t}$ with current context $x_{i,t}^{(k_{i,t})}$. The role of the first stage is to answer a simple question: given the present context, is it worth even considering the server LLM for this step, or should the step be executed locally without further deliberation?

\subsubsection{Local decision}
The first stage outputs a binary auxiliary action
\begin{equation}
a_{i,t} \in \{0,1\},
\label{eq:pre_action_new}
\end{equation}
where $a_{i,t}=0$ indicates the user will remain local, and $a_{i,t}=1$ submits the request as a candidate for server scheduling. 

Crucially, this decision is made using only information that is locally available to the user. The policy is parameterized by a lightweight neural network $\pi_\theta(a_{i,t} \mid x_{i,t}^{(k_{i,t})})$ that maps the current context to a probability distribution over the two actions. Since no information exchange among users or with the server is required, the screening can be performed asynchronously and with negligible computational overhead. This design principle directly addresses the communication and scalability concerns raised in Section~\ref{sec:globle_formulation}.

It is important to emphasize that $a_{i,t}=1$ does not constitute a commitment to server execution. It merely marks the request as a candidate; the final action $\alpha_{i,t} \in {0,1,2}$ will be determined by the second-stage scheduler under explicit resource constraints. In contrast, $a_{i,t}=0$ is binding and immediately sets $\alpha_{i,t}=0$, removing the request from the candidate pool. Consequently, the set of requests that advance to Stage 2 is $\mathcal{O}_t = \{i \in \mathcal{U}_t \mid a_{i,t}=1\}$, and its size $|\mathcal{O}_t|$ is typically a small fraction of $|\mathcal{U}_t|$, which is the key to making the subsequent resource allocation tractable.

\subsubsection{Offline training with PRM supervision}
The screening policy $\pi_\theta$ must be trained to make decisions that align with the global objective \eqref{eq:system_objective}. However, training directly on the full global return is complicated by the fact that the consequences of $a_{i,t}$ depend on second-stage scheduling outcomes and the resulting queuing and communication delays. To circumvent this circular dependency, we train $\pi_\theta$ using a coarse approximation of the global return that isolates the computational trade-off inherent to the binary decision while deferring the resource contention aspects to Stage~2.

For a complete trajectory $\tau_i$ of task $i$, we define the coarse trajectory-level return as
\begin{equation}
G_{\text{pre}}(\tau_i) = \mathrm{PRM}\bigl(x_{i,t_{K_i}}^{(K_i)}\bigr) - \mathrm{PRM}\bigl(x_{i,t_0}^{(0)}\bigr) - \beta\sum_{t \in \mathcal{T}_i} D\bigl(x_{i,t}^{(k_{i,t})}, a_{i,t}\bigr),
\label{eq:pre_return_new}
\end{equation}
where $\mathcal{T}_i$ is the set of decision slots at which task $i$ was active, and $D(x, a)$ is the computational cost incurred by the action: $D(x_{i,t},0) = \mathrm{TS}_{i,t}$ and $D(x_{i,t},1) = \mathrm{TL}_{i,t}$. By training $\pi_\theta$ to maximize $G_{\text{pre}}$, we encourage it to learn which steps intrinsically benefit from the stronger reasoning capabilities of the LLM.

To enable efficient policy gradient estimation, we decompose the trajectory-level return into a sum of per-step rewards. Specifically, we define the immediate reward associated with the transition from context $x_{i,t}^{(k_{i,t})}$ to the context at the next decision epoch (which occurs when the step completes and the task becomes active again) as
\begin{equation}
r_{i,t} = \mathrm{PRM}\bigl(x_{i,t'}^{(k_{i,t}+1)}\bigr) - \mathrm{PRM}\bigl(x_{i,t}^{(k_{i,t})}\bigr) - \beta D\bigl(x_{i,t}^{(k_{i,t})}, a_{i,t}\bigr),
\label{eq:pre_reward_new}
\end{equation}
where $t' > t$ is the slot at which the task next becomes active. With this definition, the coarse return satisfies
\begin{equation}
G_{\text{pre}}(\tau_i) = \sum_{t \in \mathcal{T}_i} r_{i,t},
\label{eq:pre_return_sum_new}
\end{equation}
which transforms the problem into a standard sequential decision format.

The policy $\pi_\theta$ and an associated value network $V_\varphi$ (used for variance reduction) are trained offline using Proximal Policy Optimization (PPO) \cite{2017arXiv170706347S}. During training, a three-step temporal-difference target is employed to estimate the advantage:
\begin{equation}
\widetilde{G}_{i,t} = r_{i,t} + r_{i,t'} + r_{i,t''} + V_\varphi\bigl(x_{i,t'''}^{(k_{i,t'''})}\bigr),
\label{eq:pre_td_target_new}
\end{equation}
\begin{equation}
A_{i,t} = \widetilde{G}_{i,t} - V_\varphi\bigl(x_{i,t}^{(k_{i,t})}\bigr),
\label{eq:pre_advantage_new}
\end{equation}
where $t', t'', t'''$ denote the successive decision slots of the same task. The policy parameters $\theta$ are then updated to maximize the clipped surrogate objective of PPO, while $\varphi$ is updated to minimize the mean squared error between $V_\varphi$ and the observed returns.

Critically, the PRM is used solely as an offline supervisor during this training phase. Its computationally expensive forward passes are never performed online. Once trained, the lightweight network $\pi_\theta$ can be deployed on edge devices with minimal latency and memory footprint, perfectly fulfilling the design goal of a decentralized screening stage.

At the conclusion of Stage~1, the system has produced a candidate set $\mathcal{O}_t \subseteq \mathcal{U}_t$. All other active users ($i \in \mathcal{U}_t \setminus \mathcal{O}_t$) have been assigned $\alpha_{i,t}=0$ and will execute their current step locally without further delay. The candidate set is then passed to the centralized second stage, which is responsible for resolving the final actions.

\subsubsection{Optimality of the local decision}
We now justify why a Stage~1 decision to stay local ($a_{i,t}=0$) remains globally optimal even though the policy is trained with a coarse reward that omits communication and queuing delays. The key observation is that these omitted delays only penalize the offloading action, leaving the local action unchanged.

To connect the first-stage coarse decision with the true system objective, we now examine when a local decision made under the coarse reward remains valid after communication and queuing effects are taken into account. Let $Q_{\pi_\theta}(x_{i,t_{K_i}}^{(K_i)}, a_{i,t})$ denote the action-value function under the coarse reward defined in \eqref{eq:pre_reward_new}, and let $\widetilde{Q}_{i,t}(a_{i,t})$ denote the action-value function under the true global objective, which additionally includes the communication delay $\mathrm{TC}_{i,t}$ and the queuing delay $\mathrm{TQ}_{i,t}$.

\begin{prop}\label{thm:1}
If for a given context $x_{i,t_{K_i}}^{(K_i)}$ the coarse value function satisfies $Q_{\pi_\theta}(x_{i,t_{K_i}}^{(K_i)}, 1) \le Q_{\pi_\theta}(x_{i,t_{K_i}}^{(K_i)}, 0)$, then the true value function also satisfies $\widetilde{Q}_{i,t}(1) \le \widetilde{Q}_{i,t}(0)$. Consequently, a decision to stay local ($a_{i,t}=0$) that is optimal under the coarse reward remains optimal under the true global objective.
\end{prop}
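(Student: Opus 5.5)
The plan is to write both action-value functions as an immediate reward plus a continuation value, and show that passing from the coarse reward to the true objective subtracts a nonnegative quantity from the offloading branch only. Under the true objective, the immediate reward for choosing $a_{i,t}=0$ coincides with the coarse reward: the step is bound to local execution ($\alpha_{i,t}=0$), so its latency is exactly $\mathrm{TS}_{i,t}=D(x_{i,t},0)$, and no communication or queuing term appears by \eqref{eq:step_latency_under_h}. For $a_{i,t}=1$, the realized latency is $\mathrm{TL}_{i,t}+\mathrm{TC}_{i,t}+\mathrm{TQ}_{i,t}$ (with $\mathrm{TQ}_{i,t}=0$ if admitted immediately). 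The true immediate reward is therefore the coarse reward minus $\beta(\mathrm{TC}_{i,t}+\mathrm{TQ}_{i,t})\ge 0$, using $\beta\ge 0$, $\mathrm{TC}_{i,t}=L_{i,t}/R_{i,t}\ge 0$, and $\mathrm{TQ}_{i,t}\ge 0$.

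The continuation part is handled by a monotonicity argument. Under either objective, the quality increment is the PRM telescoping difference, so the coarse and true per-step rewards at every later decision epoch differ only by the nonpositive penalty $-\beta(\mathrm{TC}+\mathrm{TQ})$, incurred only at offloaded steps. I would fix the continuation policy and the realized successor contexts, and assume as a \emph{mild condition} that the quality gain and successor-state distribution from offloading do not depend on $\mathrm{TC},\mathrm{TQ}$. Summing rewards along trajectories then gives
\begin{equation*}
\widetilde{Q}_{i,t}(a)=Q_{\pi_\theta}(x,a)-\beta\,\mathbb{E}\Bigl[\sum_{s}\mathbbm{1}\{\text{step } s \text{ offloaded}\}(\mathrm{TC}_s+\mathrm{TQ}_s)\Bigr],
\end{equation*}
with the extra penalty at the current step present only when $a=1$.

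The main obstacle is the future penalty. Offloading now may change the context, and hence the set and cost of future offloads, so the two continuation penalties need not be ordered. I would first try to isolate the current-step effect: define $\widetilde{Q}$ as the paper's one-step correction of $Q_{\pi_\theta}$, that is, with the same continuation value as in the coarse problem. This matches the claim that the omitted delays ``only penalize the offloading action''. Under this definition, $\widetilde{Q}_{i,t}(0)=Q_{\pi_\theta}(x,0)$ and $\widetilde{Q}_{i,t}(1)=Q_{\pi_\theta}(x,1)-\beta(\mathrm{TC}_{i,t}+\mathrm{TQ}_{i,t})$. The chain
\begin{equation*}
\widetilde{Q}_{i,t}(1)\le Q_{\pi_\theta}(x,1)\le Q_{\pi_\theta}(x,0)=\widetilde{Q}_{i,t}(0)
\end{equation*}
then gives the result immediately. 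If the full multi-step version is needed, I would additionally assume that the expected future penalty after choosing local is no larger than after choosing offload, which is natural because the local step does not lengthen the context any faster. Under that assumption the same chain inequality goes through. Optimality of $a_{i,t}=0$ then follows, since it maximizes $\widetilde{Q}_{i,t}$ over $\{0,1\}$.
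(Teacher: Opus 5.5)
Your argument is correct and follows the paper's proof. Once you take the one-step correction $\widetilde{Q}_{i,t}(0)=Q_{\pi_\theta}(x,0)$ and $\widetilde{Q}_{i,t}(1)=Q_{\pi_\theta}(x,1)-\beta(\mathrm{TC}_{i,t}+\mathrm{TQ}_{i,t})$, your chain $\widetilde{Q}_{i,t}(1)\le Q_{\pi_\theta}(x,1)\le Q_{\pi_\theta}(x,0)=\widetilde{Q}_{i,t}(0)$ is exactly the paper's. Your point that a genuinely multi-step $\widetilde{Q}$ would also need an ordering assumption on future communication and queuing penalties is valid; the paper's proof avoids it only by adopting this one-step definition, on which it sets $\widetilde{Q}_{i,t}(0)=Q_{\pi_\theta}(x,0)$ without considering future offloading costs.
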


\begin{proof}
See Appendix \ref{sec:AppA}.
\end{proof}

This proposition guarantees that training $\pi_\theta$ with the coarse reward (which ignores communication and queuing) does not lead to suboptimal local decisions. The screening policy can safely be trained offline using only computational costs, and any request that is filtered out at Stage~1 would also be rejected by an optimal policy that had full knowledge of future communication and queuing delays.

\subsection{Centralized Server Scheduling (Stage 2)}
\label{sec:secondary_decision}

Stage 1 produces a candidate set $\mathcal{O}_t$ of requests that have passed local screening. These candidates have indicated that, based solely on their local context, server offloading may be beneficial. Stage 2 now resolves the actual resource allocation: given the candidate set, the server must assign each request a final action $\alpha_{i,t} \in {0,1,2}$ (as defined in \eqref{eq:task_level_action_cases}) while respecting the server concurrency limit $M$ and the bandwidth budget $B$.

\subsubsection{Refining the value estimates}

The overall system state space and action space are identical to those defined in Section~\ref{sec:globle_formulation}. The difference in Stage 2 is that the scheduler only considers requests within the candidate set $\mathcal{O}_t$. At slot $t$, the state and action are refined as follows:
\begin{equation*}\label{eq:global_state_2}
\overline{\mathbf{s}}_t = \Bigl( \{x_{i,t}^{(k_{i,t})}, \gamma_{i,t}\}_{i\in\mathcal{O}_t},\, \{\Gamma_{j,t}\}_{j\in\mathcal{M}_t},\, \mathcal{Q}_t \Bigr).
\end{equation*}
\begin{equation*}\label{eq:joint_action_2}
\overline{\boldsymbol{\alpha}}_t
=
\Bigl(
\{\alpha_{i,t}\}_{i\in\mathcal{O}_t},\,
\{B_{i,t}\}_{i\in\mathcal{A}_t}
\Bigr).
\end{equation*}

Here, $\mathcal{Q}_t$ specifically includes the time each request has spent waiting as well as the expected server-side computation time if processed by the LLM. For each candidate we also maintain two auxiliary variables:
\begin{itemize}[leftmargin=0.5cm]
    \item $\tau_{i,t}$: the time the request has already spent waiting in the server queue before slot $t$ (zero if the request is newly submitted);
    \item $d_{i,t}$: the predicted server-side computation time $\mathrm{TL}_{i,t}$, which is a function of the context length and the number of new tokens to be generated.
\end{itemize}

Thus, the system state can also be written as
\[
\overline{\mathbf{s}}_t = \Bigl( \{x_{i,t}^{(k_{i,t})}, \gamma_{i,t}\}_{i\in\mathcal{O}_t},\, \{\Gamma_{j,t}\}_{j\in\mathcal{M}_t},\, \{\tau_{i,t}, d_{i,t}\}_{i\in\mathcal{O}_t} \Bigr).
\]
Note that the waiting time $\tau_{i,t}$ reflects past queuing delay, whereas any future queuing delay incurred if the request is placed in the queue will be denoted by $\mathrm{TQ}_{i,t}$ and is distinct from $\tau_{i,t}$.

If a candidate is admitted for immediate execution ($\alpha_{i,t}=2$), the scheduler allocates a dedicated bandwidth slice $B_{i,t}$ for its context transmission. According to \eqref{eq:comm_delay}, the resulting communication delay is
\begin{equation}
\mathrm{TC}_{i,t} = \frac{32\,\overline{L}_{i,t}}{B_{i,t}\log_2(1+\gamma_{i,t})}.
\label{eq:tc_bandwidth_function}
\end{equation}
For a request that is placed in the queue ($\alpha_{i,t}=1$), it will eventually be served but will incur a queuing delay $\mathrm{TQ}_{i,t}$ whose magnitude depends on the current server state $\mathbf{c}_t=\{\Gamma_{i,t}\}_{j \in \mathcal{O}_t}$ and the set of competing candidates.

To make the scheduling decision, we need a per-request metric that captures the net benefit of choosing each action. Recall from the first-stage training that the policy $\pi_\theta$ is associated with state-action value functions $Q_{\pi_\theta}(x_{i,t}^{(k_{i,t})},a_{i,t})$ for the binary actions $a_{i,t} \in \{0,1\}$. These functions estimate the expected cumulative coarse return conditioned on taking action $a_{i,t}$ in context $x_{i,t}^{(k_{i,t})}$ and following $\pi_\theta$ thereafter. The second stage refines these estimates by accounting for the resource costs that were omitted in the coarse model. Specifically, for each candidate $i$, we define a modified immediate reward that incorporates communication and queuing penalties:
\begin{eqnarray}\label{eq:modified_reward_stage2}
&&\hspace{-0.5cm} \widetilde{r}_{i,t}(\alpha_{i,t}) = \\
&&\hspace{-0.5cm} \begin{cases}
r_{i,t}\bigl(x_{i,t}^{(k_{i,t})}, 0\bigr) - \beta\,\tau_{i,t}, & \hspace{-0.2cm} \alpha_{i,t} = 0, \\[1mm]
r_{i,t}\bigl(x_{i,t}^{(k_{i,t})}, 1\bigr) \!-\! \beta\,\bigl(\mathrm{TC}_{i,t} \!+\! \mathrm{TQ}_{i,t}\bigr) \!-\! \beta\,\tau_{i,t}, & \hspace{-0.2cm} \alpha_{i,t} = 1, \\[1mm]
r_{i,t}\bigl(x_{i,t}^{(k_{i,t})}, 1\bigr) - \beta\,\mathrm{TC}_{i,t} - \beta\,\tau_{i,t}, & \hspace{-0.2cm} \alpha_{i,t} = 2.\notag
\end{cases} 
\end{eqnarray}
The trade-off parameter $\beta$ (introduced in \eqref{eq:system_objective}) explicitly scales the latency penalties to match the units of the PRM score. The constant offset $\beta\tau_{i,t}$ subtracts the accumulated waiting time that the request has already endured, thereby isolating the incremental future cost of the current decision.

Correspondingly, we define a modified action-value function $\widetilde{Q}_{i,t}(\alpha_{i,t})$ that substitutes the appropriate cost into the first-stage $Q$-values:
\begin{eqnarray}\label{eq:modified_q_stage2}
&&\hspace{-0.6cm} \widetilde{Q}_{i,t}(\alpha_{i,t}) =  \\
&&\hspace{-0.6cm} \begin{cases}
Q_{\pi_\theta}\!\bigl(x_{i,t}^{(k_{i,t})}, 0\bigr) - \beta\,\tau_{i,t}, & \hspace{-0.2cm} \alpha_{i,t} = 0, \\[1mm]
Q_{\pi_\theta}\!\bigl(x_{i,t}^{(k_{i,t})}, 1\bigr) \!-\! \beta\,\bigl(\mathrm{TC}_{i,t} \!+\! \mathrm{TQ}_{i,t}\bigr) \!-\! \beta\,\tau_{i,t}, & \hspace{-0.2cm} \alpha_{i,t} = 1, \\[1mm]
Q_{\pi_\theta}\!\bigl(x_{i,t}^{(k_{i,t})}, 1\bigr) - \beta\,\mathrm{TC}_{i,t} - \beta\,\tau_{i,t}, & \hspace{-0.2cm} \alpha_{i,t} = 2.\notag
\end{cases}
\end{eqnarray}
Because the first-stage value functions already capture the long-term semantic and computational trade-offs, the second-stage scheduler can focus on resolving the instantaneous resource competition using the per-slot aggregated utility $\sum_{i\in\mathcal{O}_t} \widetilde{Q}_{i,t}(\alpha_{i,t})$.

\subsubsection{Optimization}

At each slot $t$, the server faces the following constrained optimization problem:
\begin{equation}
\begin{aligned}
\max_{\overline{\boldsymbol{\alpha}}_t} \quad & \sum_{i\in\mathcal{O}_t} \widetilde{Q}_{i,t}(\alpha_{i,t}) \\
\text{s.t.} \quad & \sum_{i\in\mathcal{O}_t} \mathbb{I}(\alpha_{i,t}=2) \le M - |\mathcal{M}_t|, \\
& \sum_{i\in\mathcal{O}_t} B_{i,t}\,\mathbb{I}(\alpha_{i,t}=2) \le B,
\end{aligned}
\label{eq:stage2_optimization}
\end{equation}
where $\mathbb{I}(\cdot)$ is standard indicator function.

The first constraint ensures that the number of immediately admitted requests does not exceed the available server concurrency, while the second enforces the total bandwidth budget. This problem couples the discrete admission choices and the continuous bandwidth variables across all candidates.

To decouple the problem, we introduce two non-negative Lagrange multipliers: $\lambda_s \ge 0$ for the server concurrency constraint, and $\mu \ge 0$ for the bandwidth constraint. The Lagrangian function is
\begin{equation}
\begin{aligned}
\mathcal{L} = \sum_{i\in\mathcal{O}_t} \widetilde{Q}_{i,t}(\alpha_{i,t}) 
&- \lambda_s \Bigl( \sum_{i\in\mathcal{O}_t} \mathbb{I}(\alpha_{i,t}=2) - (M - |\mathcal{M}_t|) \Bigr) \\
&- \mu \Bigl( \sum_{i\in\mathcal{O}_t} B_{i,t}\,\mathbb{I}(\alpha_{i,t}=2) - B \Bigr).
\end{aligned}
\label{eq:lagrangian_stage2}
\end{equation}
Rearranging the terms isolates the per-request contributions:
\begin{eqnarray*}\label{eq:lagrangian_separated}
&&\hspace{-0.6cm} \mathcal{L} = \sum_{i\in\mathcal{O}_t} \Bigl[ \widetilde{Q}_{i,t}(\alpha_{i,t}) - \lambda_s \mathbb{I}(\alpha_{i,t}=2) - \mu B_{i,t} \mathbb{I}(\alpha_{i,t}=2) \Bigr] + \\
&&\hspace{0.2cm} \lambda_s (M - |\mathcal{M}_t|) + \mu B.
\end{eqnarray*}
Since the last two terms are independent of the decision variables, the optimal actions for a fixed pair $(\lambda_s, \mu)$ can be determined by comparing the modified values on a per-request basis.

\subsubsection{Optimal bandwidth allocation for immediate admission}

For a candidate that is selected for immediate execution ($\alpha_{i,t}=2$), the optimal bandwidth $B_{i,t}^*$ is obtained by maximizing the per-request Lagrangian term. Substituting $\widetilde{Q}_{i,t}(2)$ from \eqref{eq:modified_q_stage2} and differentiating with respect to $B_{i,t}$ yields
\begin{eqnarray*}
&&\hspace{-0.5cm} \frac{\partial}{\partial B_{i,t}} \Bigl[ Q_{\pi_\theta}\!\bigl(x_{i,t}^{(k_{i,t})}, 1\bigr) \!-\! \beta\,\mathrm{TC}_{i,t}(B_{i,t}) \!-\! \beta\tau_{i,t} \!-\! \lambda_s \!-\! \mu B_{i,t} \Bigr] \\
&&\hspace{-0.5cm} = \beta \frac{32\,\overline{L}_{i,t}}{B_{i,t}^2 \log_2(1+\gamma_{i,t})} - \mu = 0.
\end{eqnarray*}
Solving for $B_{i,t}$ gives the conditional optimal bandwidth
\begin{equation}
B_{i,t}^* = \sqrt{\frac{32\beta\,\overline{L}_{i,t}}{\mu \log_2(1+\gamma_{i,t})}}.
\label{eq:optimal_bandwidth_stage2}
\end{equation}
Equation \eqref{eq:optimal_bandwidth_stage2} reveals an intuitive allocation rule: requests with larger context sizes (higher $\overline{L}_{i,t}$) or poorer channel conditions (lower $\gamma_{i,t}$) receive a larger share of the bandwidth budget. The multiplier $\mu$ acts as a uniform price that balances the marginal benefit of reduced transmission time against the cost of consuming shared bandwidth.

\subsubsection{Threshold-based scheduling policy}

Because the action $\alpha_{i,t}$ is discrete, the optimal choice is found by direct comparison of the Lagrangian terms rather than differentiation. We now formalize the structure of the optimal action in Stage~2.  

Define two key quantities that measure the effective gain of offloading under resource prices and is convenient for derivation:
\begin{align}
w_{i,t} &=\! \Bigl[ Q_{\pi_\theta}\!\bigl(x_{i,t}^{(k_{i,t})}, 1\bigr) \!\!-\!\! Q_{\pi_\theta}\!\bigl(x_{i,t}^{(k_{i,t})}, 0\bigr) \Bigr] \!\!-\! \!\beta\,\mathrm{TC}_{i,t} \!\!-\!\! \mu B_{i,t}, \label{eq:w_new}\\
\overline{w}_{i,t} &= \beta\,\mathrm{TQ}_{i,t} - \mu B_{i,t}. \label{eq:wbar_new} 
\end{align}
Here, $w_{i,t}$ represents the net advantage of immediate server execution over local execution after subtracting communication cost and the bandwidth usage price, while $\overline{w}_{i,t}$ compares the cost of queuing (via $\mathrm{TQ}_{i,t}$) to the price of immediate admission.

\begin{thm}[Optimal scheduling policy]
\label{thm:stage2_policy}
For each candidate request $i \in \mathcal{O}_t$ at system slot $t$, the optimal action $\alpha_{i,t}^* \in \{0,1,2\}$ that maximizes the Lagrangian (and thus the original constrained objective) is given by the following threshold rules:

\begin{enumerate}
\item {Competitive admission} ($|\mathcal{O}_t| > M - |\mathcal{M}_t|$):
\[
\alpha_{i,t}^* =
\begin{cases}
2\;(\text{immediate}), & \min(\overline{w}_{i,t},\, w_{i,t}) \ge \lambda_s,\\[2mm]
1\;(\text{queue}),      & \overline{w}_{i,t} < \lambda_s \;\text{and}\; w_{i,t} \ge \overline{w}_{i,t},\\[2mm]
0\;(\text{local}),      & \text{otherwise}.
\end{cases}
\]

\item {Server fully occupied} ($|\mathcal{M}_t| = M$):
\[
\alpha_{i,t}^* =
\begin{cases}
1\;(\text{queue}), & w_{i,t} \ge \overline{w}_{i,t},\\
0\;(\text{local}), & \text{otherwise}.
\end{cases}
\]

\item {Abundant server capacity} ($|\mathcal{O}_t| \le M - |\mathcal{M}_t|$):
\[
\alpha_{i,t}^* =
\begin{cases}
2\;(\text{immediate}), & w_{i,t} \ge 0,\\
0\;(\text{local}),     & \text{otherwise}.
\end{cases}
\]
\end{enumerate}
\end{thm}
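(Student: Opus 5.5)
The argument works with the separable Lagrangian \eqref{eq:lagrangian_stage2}. For fixed multipliers $(\lambda_s,\mu)$ it splits into independent per-request terms, so the optimal action for each $i\in\mathcal{O}_t$ is the one maximizing
\[
\ell_{i}(\alpha)=\widetilde{Q}_{i,t}(\alpha)-\lambda_s\,\mathbb{I}(\alpha=2)-\mu B_{i,t}\,\mathbb{I}(\alpha=2).
\]
The plan is to evaluate $\ell_i$ for each of the three actions using \eqref{eq:modified_q_stage2} and then compare them pairwise. The common offset $-\beta\tau_{i,t}$ cancels in every comparison, as does $Q_{\pi_\theta}(x_{i,t}^{(k_{i,t})},0)$ once each term is written relative to $\ell_i(0)$. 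After the subtraction,
\[
\ell_i(2)-\ell_i(0)=w_{i,t}-\lambda_s,
\]
where $B_{i,t}$ is evaluated at the conditional optimum \eqref{eq:optimal_bandwidth_stage2}. Similarly, by \eqref{eq:w_new}--\eqref{eq:wbar_new},
\[
\ell_i(2)-\ell_i(1)=\overline{w}_{i,t}-\lambda_s ,
\qquad
\ell_i(1)-\ell_i(0)=w_{i,t}-\overline{w}_{i,t}.
\]
Everything after this reduces to reading off sign conditions on these three differences.

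Next I split into the three regimes, which correspond to complementary slackness for $\lambda_s$.

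\emph{Abundant capacity}, $|\mathcal{O}_t|\le M-|\mathcal{M}_t|$. The concurrency constraint cannot bind even if every candidate is admitted, so by complementary slackness $\lambda_s=0$. In this regime nobody needs to wait, so queuing is never chosen (equivalently, $\mathrm{TQ}_{i,t}=0$ and action $1$ coincides with immediate admission without any gain). The comparison then reduces to the sign of $w_{i,t}$.

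\emph{Server full}, $|\mathcal{M}_t|=M$. No request can be admitted immediately, so action $2$ is infeasible, which I model as $\lambda_s=+\infty$. The remaining choice between $1$ and $0$ is decided by $w_{i,t}-\overline{w}_{i,t}\ge 0$.

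\emph{Competitive case}. Here $\lambda_s\ge 0$ is finite and all three actions are available. Action $2$ is optimal iff it dominates both alternatives, i.e., $w_{i,t}\ge\lambda_s$ and $\overline{w}_{i,t}\ge\lambda_s$, which is $\min(\overline{w}_{i,t},w_{i,t})\ge\lambda_s$. Otherwise action $1$ is optimal iff it beats $2$ and $0$, i.e., $\overline{w}_{i,t}<\lambda_s$ and $w_{i,t}\ge\overline{w}_{i,t}$. In every remaining case action $0$ is optimal. A short check is needed to confirm that the three regions partition the parameter space, with ties assigned as in the statement.

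The main obstacle is the claim in parentheses that maximizing the Lagrangian also solves the original constrained problem \eqref{eq:stage2_optimization}. The problem is mixed-integer, so strong duality does not hold in general. My first approach is to restrict the claim to multipliers $(\lambda_s^*,\mu^*)$ satisfying complementary slackness, meaning $\lambda_s^*$ equals the $(M-|\mathcal{M}_t|)$-th largest admission score and $\mu^*$ is chosen so that $\sum B_{i,t}^*=B$. I would then argue as follows. Each per-request problem is a selection of at most one item, so with a single cardinality constraint the threshold rule is exactly a top-$(M-|\mathcal{M}_t|)$ selection, which is optimal for a cardinality-constrained separable problem. The bandwidth constraint is continuous and the problem is convex in $B$ for a fixed admitted set, so KKT conditions give optimality there. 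Combining the two requires monotonicity of the admitted set in $\mu$. I would present this as the justification and flag ties and the interaction between the two multipliers as the delicate points.
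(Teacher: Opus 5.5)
Your per-request reduction, the three differences $\ell_i(2)-\ell_i(0)=w_{i,t}-\lambda_s$, $\ell_i(2)-\ell_i(1)=\overline{w}_{i,t}-\lambda_s$ and $\ell_i(1)-\ell_i(0)=w_{i,t}-\overline{w}_{i,t}$, and your three-regime reading are exactly the paper's appendix argument. That reading covers complementary slackness giving $\lambda_s=0$ under abundant capacity, action $2$ being infeasible when the server is full, and the partition check in the competitive case. So the threshold rules themselves are derived correctly.

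One local fix concerns the abundant-capacity case. Your justification ``$\mathrm{TQ}_{i,t}=0$ and action $1$ coincides with immediate admission without any gain'' fails inside the Lagrangian. With $\lambda_s=0$ and $\mathrm{TQ}_{i,t}=0$, your own formula gives $\ell_i(2)-\ell_i(1)=\overline{w}_{i,t}=-\mu B_{i,t}<0$. This happens because the bandwidth price is charged only on $\alpha_{i,t}=2$, while $\widetilde{Q}_{i,t}(1)$ carries the same $\mathrm{TC}_{i,t}$, so the Lagrangian would actually prefer queuing. Action $1$ has to be excluded in that regime as a modeling assumption. That is what the paper does (``each request is either immediately admitted or returned locally'').

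The genuine gap is in your last paragraph, which goes beyond the paper: the paper never proves ``and thus the original constrained objective,'' it only asserts it. The tool you want is the elementary Lagrangian sufficiency (Everett) argument. If some $\lambda_s,\mu\ge 0$ make the Lagrangian maximizer feasible and complementary-slack, then that maximizer is optimal for \eqref{eq:stage2_optimization}. No convexity, top-$K$ or monotonicity argument is needed.

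What cannot be taken for granted is your step ``$\mu^*$ is chosen so that $\sum B^*_{i,t}=B$.'' As $\mu$ varies, the admitted set changes by jumps, so the total bandwidth demand can skip over $B$, and monotonicity does not prevent this. A single candidate under abundant capacity (so $\lambda_s=0$, $\beta>0$) already shows it:
\begin{itemize}
\item Let $c_i=32\overline{L}_{i,t}/\log_2(1+\gamma_{i,t})$ and $g_i=Q_{\pi_\theta}(x_{i,t}^{(k_{i,t})},1)-Q_{\pi_\theta}(x_{i,t}^{(k_{i,t})},0)$.
\item Since $B^*_{i,t}=\sqrt{\beta c_i/\mu}$, one gets $w_{i,t}=g_i-2\sqrt{\beta c_i\mu}$.
\item The rule therefore admits only if $\mu\le g_i^2/(4\beta c_i)$, and in that case $B^*_{i,t}\ge 2\beta c_i/g_i$.
\item The true optimum admits using the whole band $B$, which it does iff $g_i\ge\beta c_i/B$.
\end{itemize}
Now take $g_i\in[\beta c_i/B,\,2\beta c_i/B)$. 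Every $\mu$ at which the Lagrangian admits has $B^*_{i,t}>B$, which is infeasible. Every $\mu>0$ at which it does not admit violates complementary slackness. So there is a duality gap, and no $\mu$ makes the threshold rule both feasible and optimal.

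The parenthetical claim therefore holds only conditionally, namely whenever such multipliers exist. You should state it in that form rather than attempt an unconditional proof.
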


\begin{proof}
See Appendix \ref{sec:AppB}.
\end{proof}

The theorem shows that the Stage~2 scheduler admits a threshold-based structure with respect to the effective offloading gain. In particular, sufficiently valuable requests are immediately admitted to the server, requests with moderate gain are placed in the queue when immediate admission is too costly, and the remaining requests are returned to local execution.

\subsubsection{Practical implementation}

To execute the above rules online, the server must efficiently estimate three quantities for each candidate: the value difference $Q_{\pi_\theta}(x_{i,t}^{(k_{i,t})},1) - Q_{\pi_\theta}(x_{i,t}^{(k_{i,t})},0)$, the queuing delay $\mathrm{TQ}_{i,t}$, and the server computation time $d_{i,t}$.

\textit{Value difference estimation.}
Instead of maintaining a full copy of the value network $V_\varphi$, the server employs a lightweight advantage predictor $A_\omega$ that directly estimates the advantage of the server action $A_{\pi_\theta}(x_{i,t}^{(k_{i,t})},1)$. Using the relationship between advantage functions for a binary policy, one can show that
\begin{equation}
Q_{\pi_\theta}(x_{i,t}^{(k_{i,t})},1) - Q_{\pi_\theta}(x_{i,t}^{(k_{i,t})},0) = \frac{A_{\pi_\theta}(x_{i,t}^{(k_{i,t})},1)}{\pi_\theta(0 \mid x)}.
\label{eq:adv_identity_stage2}
\end{equation}
Hence, with a compact network $A_\omega$ trained offline to predict $A_{\pi_\theta}(x_{i,t}^{(k_{i,t})},1)$, the required value difference is obtained via a single forward pass, avoiding any costly PRM evaluation online.

\textit{Queuing delay prediction.}
The future queuing delay $\mathrm{TQ}_{i,t}$ for a request placed in the queue depends on the remaining service times of the $|\mathcal{M}_t|$ in-flight LLM requests($\{\Gamma_{j,t}\}_{j\in\mathcal{M}_t}$) and on the set of other candidate requests that may be queued ahead of it. In particular, the estimated waiting time is determined by how these existing and newly queued requests occupy the limited server slots over time. Since the priority order of candidates is determined by their own $\min(\overline{w}_{i,t}, w_{i,t})$ values, which in turn depend on $\mathrm{TQ}_{i,t}$, a self-consistency loop arises. In practice, the server resolves this by initializing a candidate order (e.g., by descending $w_{i,t}$) and iteratively updating the estimated service start times until the ordering stabilizes. The server-side LLM computation time of each candidate request $d_{i,t}$ is then incorporated after the service start time is determined. Convergence is typically achieved within a few iterations due to the small size of $\mathcal{O}_t$.

\begin{rem}[Special case: fully occupied server.]
When the server is fully occupied, immediate admission is infeasible, and the scheduler only needs to compare queue admission with local fallback. In this case, we introduce the zero-price offloading gain
\begin{equation*}
w_{i,t}^0
=
Q_{\pi_\theta}\bigl(x_{i,t}^{(k_{i,t})},1\bigr)
-
Q_{\pi_\theta}\bigl(x_{i,t}^{(k_{i,t})},0\bigr)
-
\beta\,\mathrm{TC}_{i,t}\big|_{B_{i,t}=B}.
\end{equation*}
Here, $B_{i,t}=B$ is used as a fixed reference bandwidth rather than an optimized allocation result, since bandwidth is no longer an active decision variable in the fully occupied case. Therefore, $w_{i,t}^0$ should be understood as a nominal offloading gain for comparing queue admission with local execution under zero bandwidth price.
\end{rem}

\textit{Computation delay prediction.}
The server-side computation delay $d_{i,t} = \mathrm{TL}_{i,t}$ is predicted by a lightweight neural network $D_\psi$. This network takes as input the context length $\overline{L}_{i,t}$ and outputs an estimate of the required time. The network $D_\psi$ is trained offline on execution traces collected from the server LLM using a simple mean squared error loss.

With these components, the second-stage scheduler operates with minimal computational overhead. The iterative search for $\mu$ converges rapidly, and the per-request evaluations involve only small neural networks. This design ensures that the centralized scheduling stage itself does not become a bottleneck, thereby preserving the end-to-end acceleration promised by the PRADA framework.

\section{Simulation Experiments and Discussions}
\label{sec:simulation_results}

This section conducts simulation experiments to verify the effectiveness of the PRADA framework and to analyze its behavior under different resource conditions. We first evaluate the trained policy network $\pi_\theta$ in a single-user setting, isolating the quality-latency trade-off it learns before any multi-user resource contention is introduced. Building on this, we then assess the complete PRADA algorithm in dynamic multi-user scenarios, examining its ability to generalize across diverse reasoning benchmarks and its sensitivity to the two key system resources: the server parallel capacity $M$ and the total communication bandwidth $B$. Throughout all experiments, the SLM is realized by \texttt{Qwen2.5-Math-1.5B-Instruct} and the LLM by \texttt{Qwen2.5-Math-7B-Instruct}~\cite{yang2024qwen25mathtechnicalreportmathematical}.

\subsection{Single-User Evaluation of the Policy Network $\pi_\theta$}

The policy network $\pi_\theta$ forms the core of the first-stage decentralized screening and directly determines the quality-latency trade-off that will be available to the multi-user scheduler. It is therefore essential to scrutinize its behavior in isolation, before integrating it into the full PRADA system.

\begin{figure*}[t]
\centering
\includegraphics[width=0.8\textwidth]{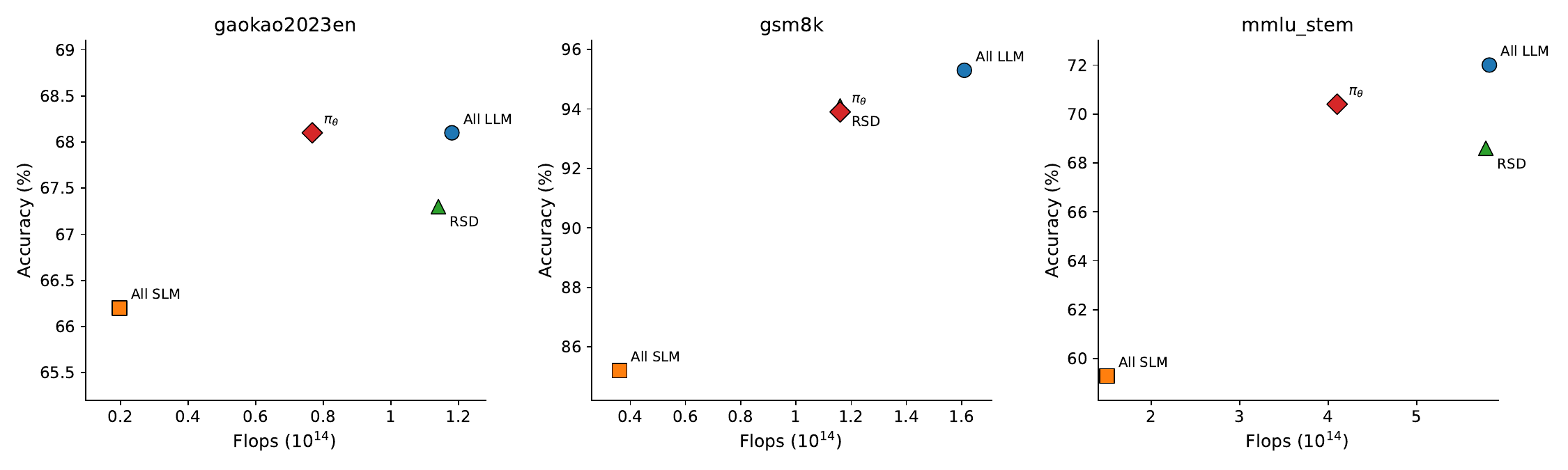}
\caption{Accuracy and computational cost of the policy network $\pi_\theta$ on three reasoning benchmarks, compared with all-SLM, all-LLM, and RSD baselines.}
\label{fig:pi_theta_all}
\end{figure*}

\subsubsection{Experimental setup and training details}
Both $\pi_\theta$ and its companion value network $V_\varphi$ are implemented as compact fully-connected networks with four layers interleaved with ReLU activations. The hidden dimensions of the first three layers are $512$, $256$, and $512$. The output layer of $\pi_\theta$ produces a two-dimensional softmax over the action set ${\text{SLM},\text{LLM}}$, while that of $V_\varphi$ returns a scalar state value. 
The networks are trained with the Adam optimizer on the \texttt{math500-test} dataset, using \texttt{Skywork-o1-Open-PRM-Qwen-2.5-7B}~\cite{he_2024_16998085} to supply the dense reward signal defined in \eqref{eq:pre_reward_new}. 
We deliberately evaluate on the test set rather than on the training set to avoid overly optimistic scores, because the SLM and the LLM have already been fine-tuned on the training portions of many standard benchmarks. The complete set of training hyperparameters is listed in Table~\ref{tab:hyper1}.

\begin{table}[t]
\caption{Hyperparameter settings for training $\pi_\theta$ and $V_\varphi$.}
\centering
\setlength{\tabcolsep}{3mm}
\begin{tabular}{ccc}
\toprule
\textbf{Parameter} & \textbf{Description} & \textbf{Value} \\
\midrule
$N$ & Total training epochs & $100$ \\
$h$ & Training epochs per trajectory & $5$ \\
$\eta_\theta$ & Learning rate of policy network & $5 \times 10^{-5}$ \\
$\eta_\varphi$ & Learning rate of value network & $5 \times 10^{-5}$ \\
$\epsilon$ & PPO clipping value & $0.2$ \\
\bottomrule
\end{tabular}
\label{tab:hyper1}
\end{table}

To avoid redundant retraining in later stages, we jointly train the auxiliary networks $A_\omega$ (advantage predictor) and $D_\psi$ (delay predictor) at this stage. 
The advantage network $A_\omega$ consists of an input BatchNorm layer, three fully connected layers, two SiLU activation layers, and one FiLM layer. The fully connected layers use hidden dimensions of $256$ and $512$, and the output is a single scalar. The FiLM layer itself contains two fully connected layers and one SiLU activation layer, with a hidden dimension of $256$ and an output dimension equal to twice the input dimension.
The delay prediction network $D_\psi$ has a similar architecture, except that it does not include a FiLM layer.
Both auxiliary networks are trained with Adam and a learning rate of $5 \times 10^{-5}$.

\subsubsection{Results and analysis}
We evaluate the trained $\pi_\theta$ on three reasoning benchmarks: \texttt{gsm8k}, \texttt{gaokao2023en}, and \texttt{mmlu\_stem}. 
The accuracies achieved by using only the SLM and only the LLM are taken as the lower and upper bounds, respectively, while the corresponding computation costs are also recorded. 
To further assess our method, we include RSD~\cite{liao2025rewardguided} as a representative baseline. 
For a fair comparison between performance and runtime overhead, we adopt \texttt{Skywork-o1-Open-PRM-Qwen-2.5-1.5B}~\cite{he_2024_16998085} as the PRM used in RSD. The results are shown in Figures~\ref{fig:pi_theta_all}.

We have the following main observations.
\begin{itemize}[leftmargin=0.5cm]
    \item Across all three datasets, $\pi_\theta$ consistently achieves a markedly better trade-off than the all-SLM baseline: it substantially improves accuracy while incurring only a fraction of the computation required by the all-LLM strategy. This confirms that the learned policy can identify which reasoning steps genuinely profit from the stronger LLM and which can be handled efficiently by the local SLM.
    \item Compared with RSD, $\pi_\theta$ is especially advantageous on the more challenging \texttt{gaokao2023en} and \texttt{mmlu\_stem} benchmarks, delivering higher accuracy with lower computational cost. On \texttt{gsm8k} the two methods are highly competitive, exhibiting comparable accuracy-cost trade-offs. These results demonstrate that the proposed screening policy is an effective routing mechanism, not merely a cheap alternative.
    \item Across all datasets, $\pi_\theta$ lies substantially closer to the all-LLM accuracy bound than to the all-SLM bound, while its computation cost remains far below the all-LLM level. This indicates that the policy network successfully captures high-value offloading opportunities without over-using the stronger model.
\end{itemize}

Overall, the results confirm that $\pi_\theta$ provides a strong single-user collaboration policy and a reliable foundation for the full PRADA framework in dynamic multi-user scenarios.

\begin{table*}[t]
\centering
\caption{Generalization performance of PRADA across three reasoning benchmarks.}
\label{tab:performance}
\setlength{\tabcolsep}{4mm}
\begin{tabular}{llccc}
\toprule
\textbf{Dataset} & \textbf{Metric} & \textbf{All SLM} & \boldmath$\pi_\theta$ & \textbf{PRADA} \\
\midrule
\multirow{4}{*}{\texttt{gsm8k}}
& Accuracy (\%) & 85.2 & 93.9 & 90.3 \\
& Avg. Processing Delay (ms/task) & $3.1$ & $9.9$ & $6.0$ \\
& Avg. Communication Delay (ms/task) & -- & -- & $0.4$ \\
& Avg. Queuing Delay (ms/task) & -- & -- & $2.5$ \\
\midrule
\multirow{4}{*}{\texttt{gaokao2023en}}
& Accuracy (\%) & 66.2 & 68.1 & 67.3 \\
& Avg. Processing Delay (ms/task) & $5.8$ & $22.5$ & $11.7$ \\
& Avg. Communication Delay (ms/task) & -- & -- & $0.6$ \\
& Avg. Queuing Delay (ms/task) & -- & -- & $3.2$ \\
\midrule
\multirow{4}{*}{\texttt{mmlu\_stem}}
& Accuracy (\%) & 59.3 & 70.4 & 65.2 \\
& Avg. Processing Delay (ms/task) & $5.6$ & $15.3$ & $8.9$ \\
& Avg. Communication Delay (ms/task) & -- & -- & $0.7$ \\
& Avg. Queuing Delay (ms/task) & -- & -- & $3.8$ \\
\bottomrule
\end{tabular}
\end{table*}

\subsection{Multi-User Evaluation of PRADA}

We now evaluate PRADA in dynamic multi-user scenarios. The system configuration is given in Table~\ref{tab:hyper2}, and unless stated otherwise, this configuration is used throughout the subsequent experiments.

\begin{table}[t]
\caption{Hyperparameter settings for PRADA.}
\label{tab:hyper2}
\centering
\setlength{\tabcolsep}{2mm}
\begin{tabular}{c p{4.6cm} c}
\toprule
\textbf{Parameter} & \textbf{Description} & \textbf{Value} \\
\midrule
$M$ & Maximum parallel capacity of server & $9$ \\
$B$ & Total available bandwidth (bit/s) & $4 \times 10^{7}$ \\
$\lambda$ & User arrival intensity & $3$ \\
$\Delta t$ & Slot length (ms) & $1$ \\
$\mathrm{FlopsLLM}$ & LLM computation capability (FLOPs/s) & $8 \times 10^{13}$ \\
$\mathcal{K}_{\max}$ & Maximum heuristic search iterations & $80$ \\
\bottomrule
\end{tabular}
\end{table}

Table~\ref{tab:performance} summarizes the results on \texttt{gsm8k}, \texttt{gaokao2023en}, and \texttt{mmlu\_stem}. The following conclusions can be drawn.
\begin{itemize}[leftmargin=0.5cm]
    \item PRADA preserves a large portion of the accuracy improvement brought by the policy network $\pi_\theta$ across all three benchmarks. On \texttt{gsm8k}, PRADA achieves 90.3\% accuracy, compared with 93.9\% for $\pi_\theta$ and 85.2\% for the all-SLM baseline. On \texttt{gaokao2023en}, PRADA reaches 67.3\%, remaining close to the 68.1\% achieved by $\pi_\theta$ and still exceeding the all-SLM result of 66.2\%. A similar trend is observed on \texttt{mmlu\_stem}, where PRADA attains 65.2\%, compared with 70.4\% for $\pi_\theta$ and 59.3\% for all-SLM. These results indicate that the second-stage multi-user scheduling mechanism does not substantially undermine the effectiveness of the first-stage routing policy.
    \item Simultaneously, PRADA significantly reduces the average processing delay relative to directly following $\pi_\theta$. On \texttt{gsm8k}, the average processing delay decreases from 9.9 ms/task to 6.0 ms/task. On \texttt{gaokao2023en}, it is reduced from 22.5 ms/task to 11.7 ms/task, which corresponds to nearly a 48\% reduction. On \texttt{mmlu\_stem}, the processing delay drops from 15.3 ms/task to 8.9 ms/task. At the same time, the resulting delay remains only moderately higher than that of the all-SLM baseline. This shows that PRADA successfully translates the strong single-user routing capability of $\pi_\theta$ into a more practical multi-user strategy with substantially lower computation cost.
    \item Communication delay remains very small across all three datasets, ranging from only 0.4 to 0.7 ms/task. This suggests that under the current bandwidth setting, transmission overhead is well controlled and is not the dominant source of latency. By comparison, the queuing delay is larger than the communication delay, reaching 2.5 ms/task on \texttt{gsm8k}, 3.2 ms/task on \texttt{gaokao2023en}, and 3.8 ms/task on \texttt{mmlu\_stem}. Thus, server contention contributes more to additional latency than pure communication overhead, although both remain within a relatively small range.
    \item Our PRADA framework generalizes well across datasets of varying difficulty: in all three cases it maintains a favorable accuracy-latency trade-off, preserving most of the performance gain of $\pi_\theta$ while significantly lowering system cost through resource-aware scheduling.
\end{itemize}

\begin{figure}[t]
    \centering
    \begin{subfigure}[t]{0.48\linewidth}
        \centering
        \includegraphics[width=\linewidth]{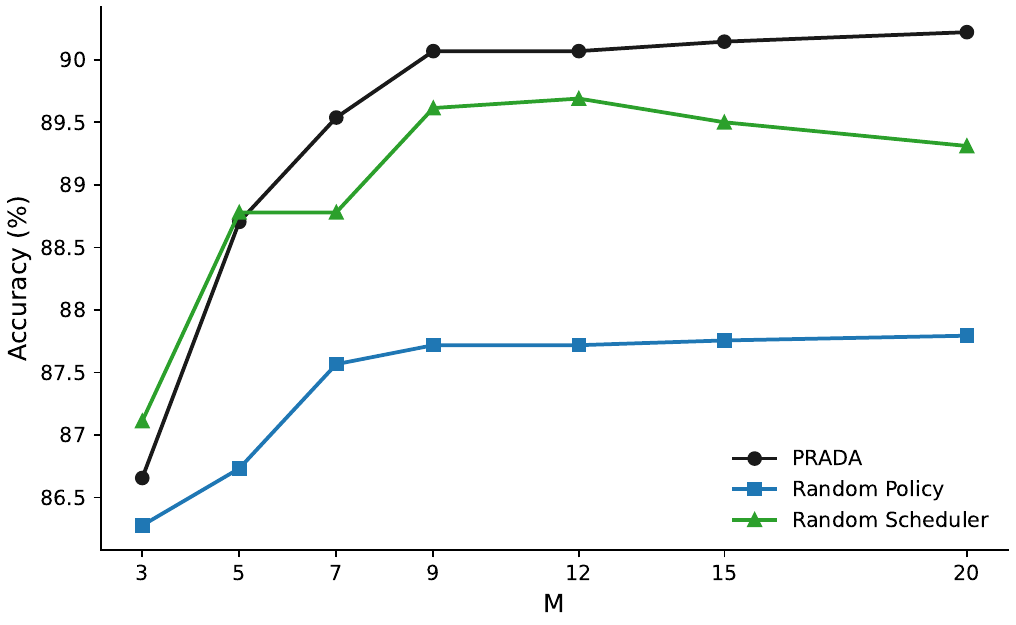}
        \caption{Accuracy vs $M$}
        \label{fig:accuracy_vs_M}
    \end{subfigure}
    \hfill
    \begin{subfigure}[t]{0.48\linewidth}
        \centering
        \includegraphics[width=\linewidth]{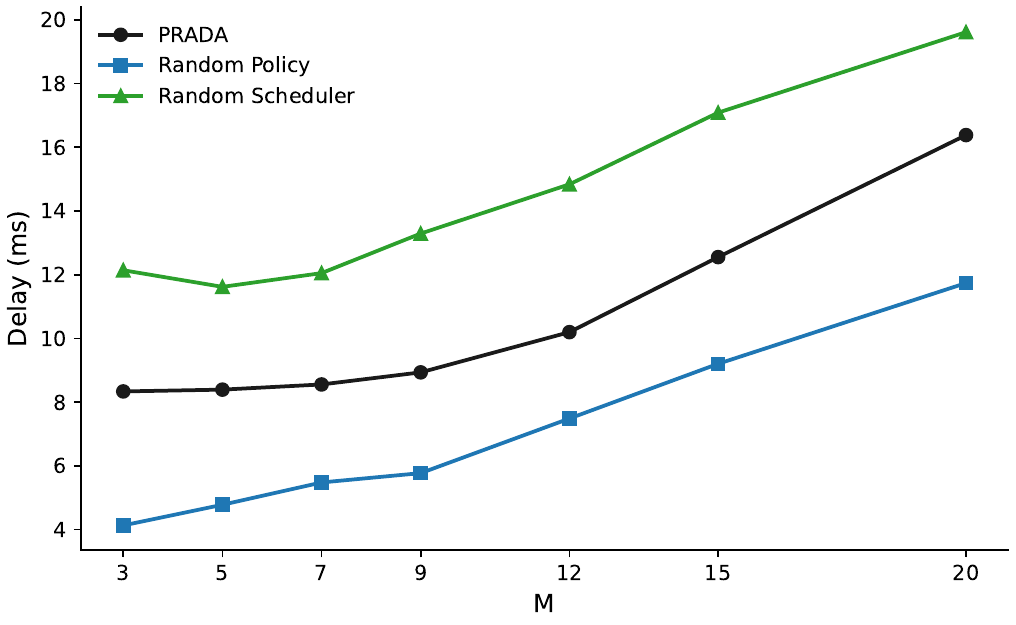}
        \caption{Total Delay vs $M$}
        \label{fig:delay_vs_M}
    \end{subfigure}
    \caption{Performance of PRADA under different server parallel capacities $M$.}
    \label{fig:performance_vs_M}
\end{figure}

\subsection{Impact of the Server Parallel Capacity $M$}

\begin{figure*}[t]
    \centering
    \begin{subfigure}[b]{0.32\textwidth}
        \includegraphics[width=\textwidth]{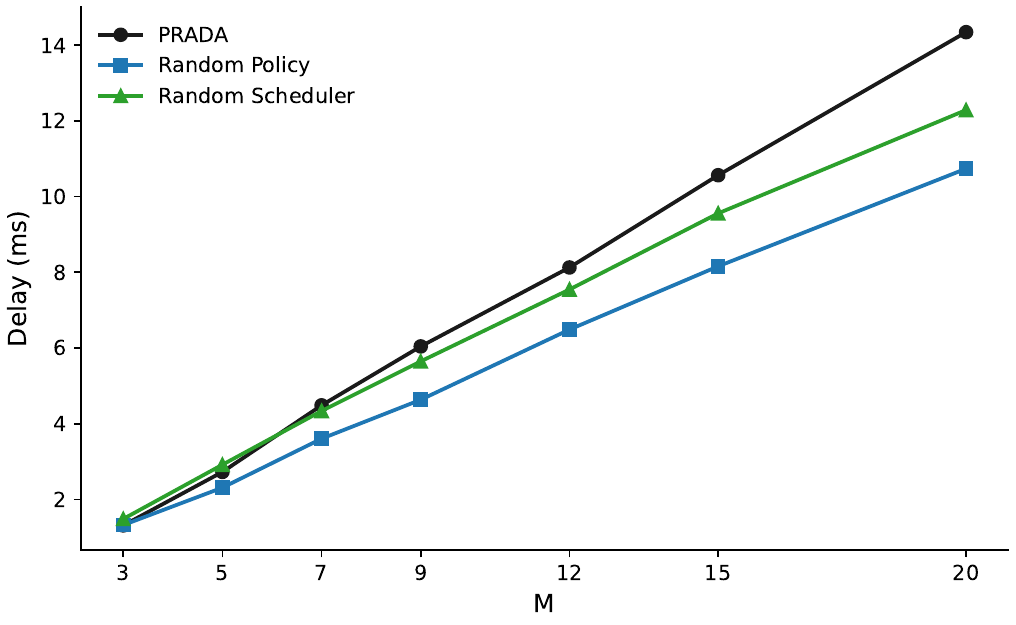}
        \caption{Processing delay versus $M$}
        \label{fig:processing_delay_vs_M}
    \end{subfigure}
    \hfill
    \begin{subfigure}[b]{0.32\textwidth}
        \includegraphics[width=\textwidth]{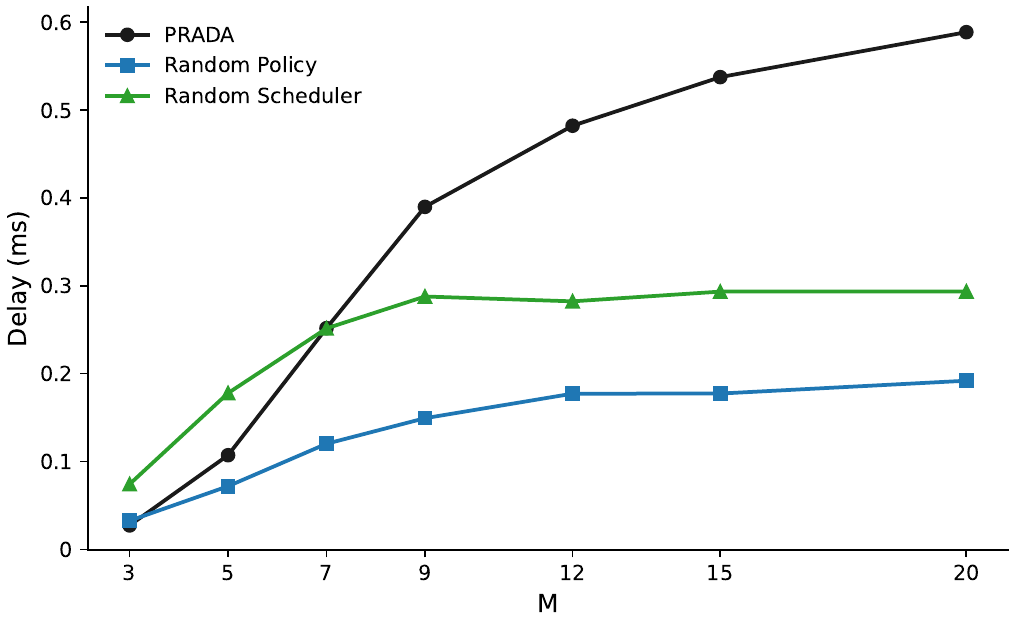}
        \caption{Communication delay versus $M$}
        \label{fig:comm_delay_vs_M}
    \end{subfigure}
    \hfill
    \begin{subfigure}[b]{0.32\textwidth}
        \includegraphics[width=\textwidth]{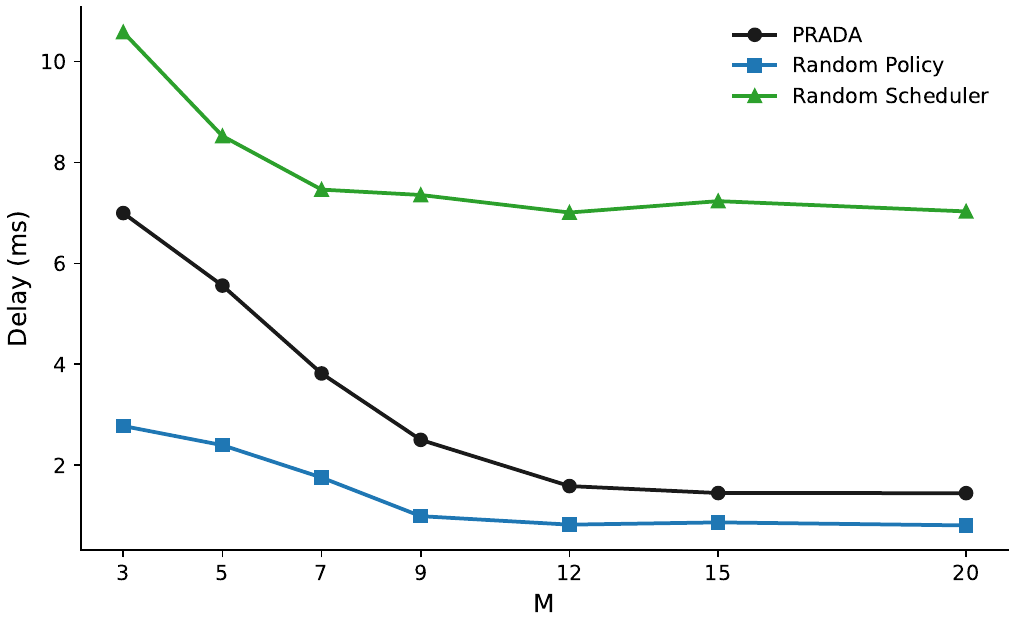}
        \caption{Queuing delay versus $M$}
        \label{fig:queue_delay_vs_M}
    \end{subfigure}
    \caption{Processing, communication, and queuing delays as functions of the server LLM parallel capacity $M$.}
    \label{fig:delay_all}
\end{figure*}

The server parallel capacity $M$ plays a key role in system performance. When $M$ is small, server-worthy requests are delayed by queuing, which increases latency and may degrade task quality. As $M$ increases, this bottleneck is gradually alleviated. However, once $M$ becomes sufficiently large, the marginal gain diminishes, since the system is no longer limited by cloud concurrency. Therefore, selecting an appropriate range of $M$ is essential for balancing accuracy and resource efficiency.

To isolate the contribution of each stage, we introduce two ablation baselines. Random Policy replaces the learned routing decision in Stage~1 with uniform sampling. Random Scheduler preserves the first-stage policy but replaces Stage~2 with a uniform random decision over local fallback, queue admission, and immediate execution. Infeasible sampled actions are mapped to valid alternatives. These baselines are also used in the bandwidth sensitivity study.

Taking the \texttt{gsm8k} dataset as an example, we evaluate the system under different values of $M$. Figure~\ref{fig:performance_vs_M} shows the resulting task accuracy and total delay.

\begin{figure*}[htbp]
    \centering
    \begin{subfigure}[b]{0.32\textwidth}
        \includegraphics[width=\textwidth]{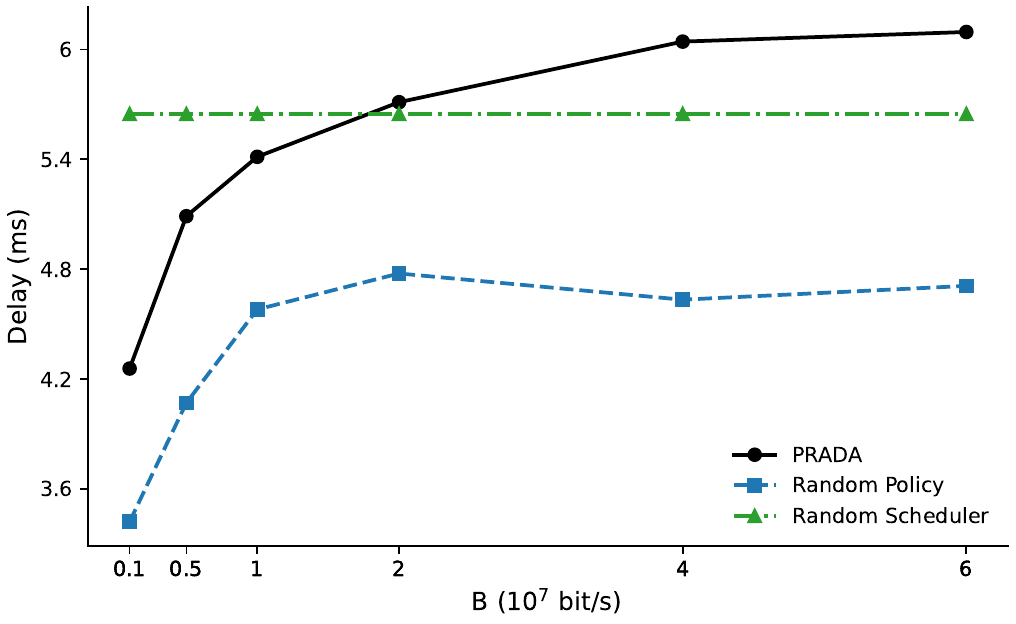}
        \caption{Processing delay versus $B$}
        \label{fig:processing_delay_vs_B}
    \end{subfigure}
    \hfill
    \begin{subfigure}[b]{0.32\textwidth}
        \includegraphics[width=\textwidth]{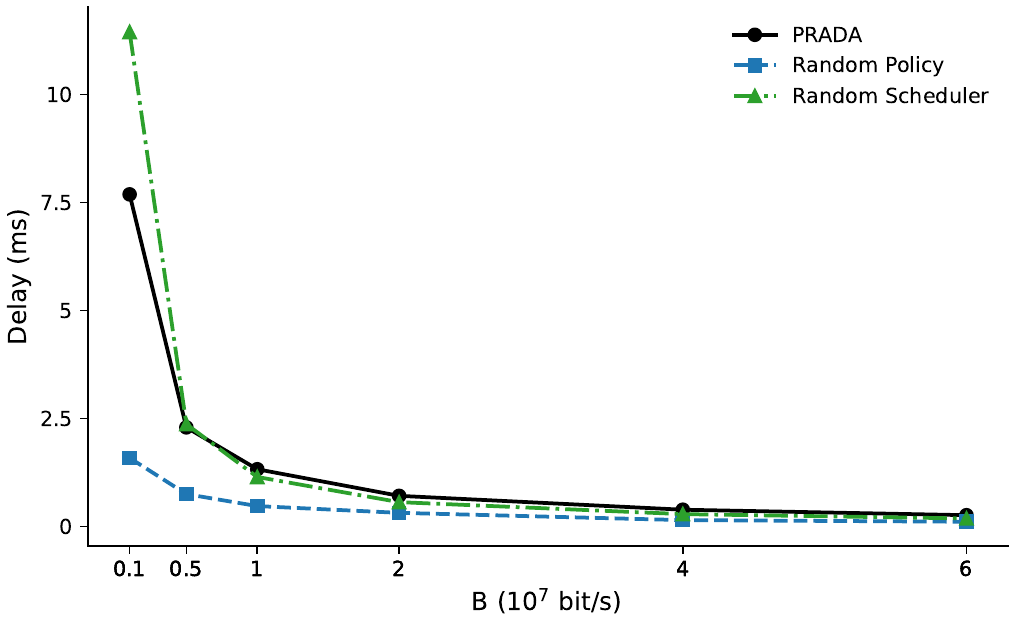}
        \caption{Communication delay versus $B$}
        \label{fig:comm_delay_vs_B}
    \end{subfigure}
    \hfill
    \begin{subfigure}[b]{0.32\textwidth}
        \includegraphics[width=\textwidth]{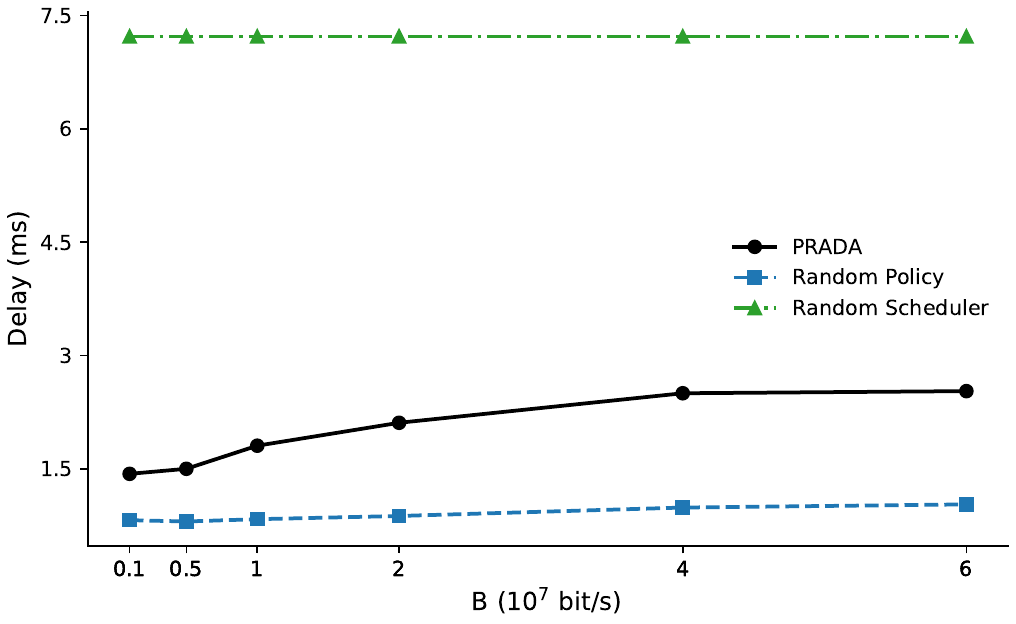}
        \caption{Queuing delay versus $B$}
        \label{fig:queue_delay_vs_B}
    \end{subfigure}
    \caption{Processing, communication, and queuing delays as functions of total bandwidth $B$.}
    \label{fig:delay_all_B}
\end{figure*}

Several observations can be drawn from Figure~\ref{fig:accuracy_vs_M}. As $M$ increases from 3 to 9, PRADA achieves substantial accuracy gains, indicating that the system is initially constrained by limited cloud concurrency. Beyond this range, performance saturates, suggesting that most beneficial offloading decisions have already been realized.

Interestingly, the random scheduler slightly outperforms PRADA in the small-$M$ regime. Under severe resource constraints, only a few requests can be admitted to the cloud, and randomization increases the chance of selecting high-value samples that may be missed by the learned policy. However, this advantage disappears as $M$ grows. Without resource awareness, the random scheduler increasingly allocates capacity to low-value requests, leading to degraded accuracy.

In contrast, PRADA consistently improves with $M$ and remains stable in the high-capacity regime, demonstrating effective utilization of additional resources through coordinated routing and scheduling. This confirms that the performance gain of PRADA does not arise solely from increased capacity, but from the joint design of the two-stage decision framework.

Figure~\ref{fig:delay_all} reports the delay decomposition. The queuing delay of PRADA decreases rapidly with $M$ and approaches zero once sufficient capacity is available, confirming that concurrency is the dominant bottleneck in the low-$M$ regime. Meanwhile, both processing and communication delays increase with $M$, as more requests are offloaded to the cloud, resulting in higher computation and transmission overhead per task.

The ablation baselines exhibit distinct patterns. The random policy suffers from large queuing delay when $M$ is small, while the random scheduler maintains consistently high queuing delay due to its inability to adapt to system congestion. These results highlight the necessity of both structured routing and resource-aware scheduling.

Overall, $M$ exhibits a clear threshold effect: increasing it is highly beneficial when the system is resource-limited, but provides limited gain once queuing is eliminated. This suggests that moderate provisioning is sufficient to achieve most of the attainable performance improvement.

\subsection{Impact of Total Bandwidth $B$}

The total bandwidth $B$ determines how efficiently edge-side SLMs can transmit contexts to the cloud LLM. When $B$ is small, many server-worthy requests cannot be offloaded in time, limiting the benefit of collaboration. Increasing $B$ alleviates this communication bottleneck, while the marginal gain diminishes once bandwidth is no longer the dominant constraint.

Using the \texttt{gsm8k} dataset with $M=9$, we evaluate PRADA under different bandwidth values. Figure~\ref{fig:accuracy_vs_B} shows the resulting task accuracy. As $B$ increases from the low-bandwidth regime, the accuracy of PRADA improves steadily, with a more pronounced gain around $10^6$ to $10^{7}$, and then saturates. This indicates a clear threshold effect: once bandwidth is sufficient to support offloading of longer-context requests, the benefit of LLM collaboration is largely realized.

\begin{figure}[t]
    \centering
    \begin{subfigure}[t]{0.48\linewidth}
        \centering
        \includegraphics[width=\linewidth]{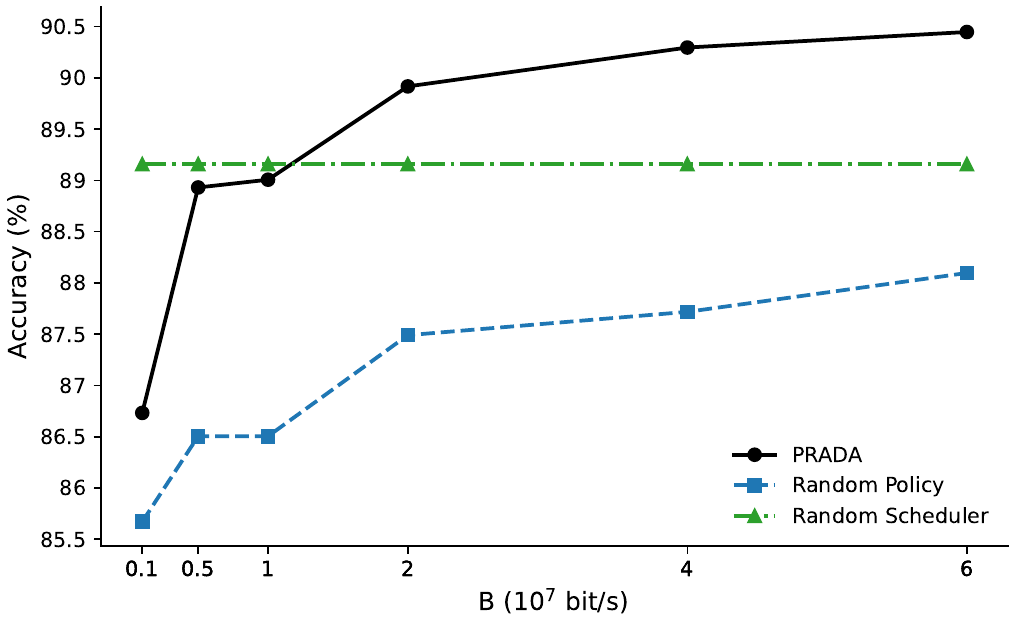}
        \caption{Accuracy vs $B$}
        \label{fig:accuracy_vs_B}
    \end{subfigure}
    \hfill
    \begin{subfigure}[t]{0.48\linewidth}
        \centering
        \includegraphics[width=\linewidth]{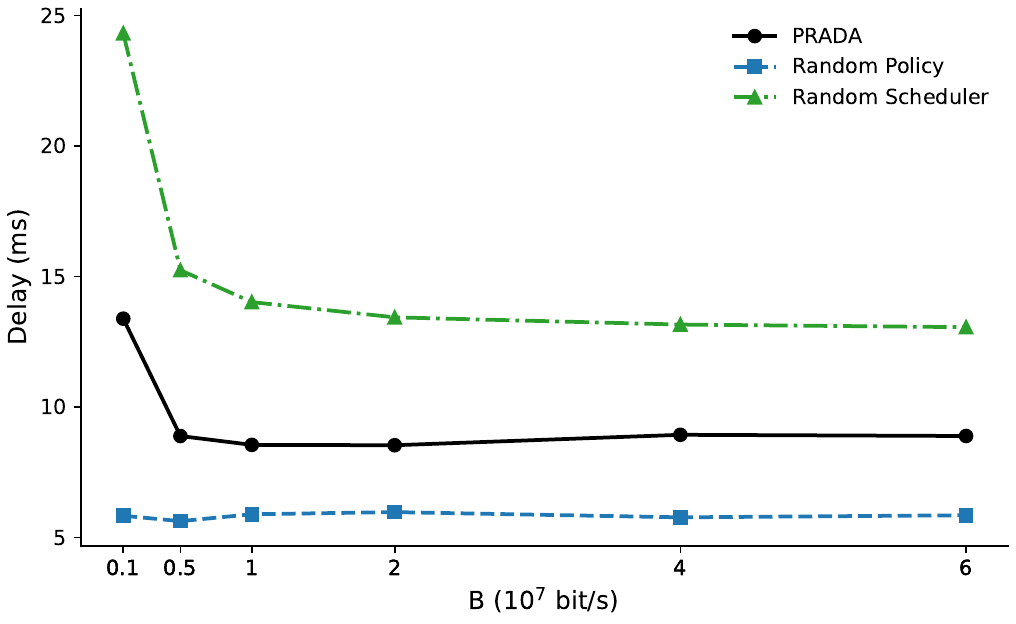}
        \caption{Total Delay vs $B$}
        \label{fig:delay_vs_B}
    \end{subfigure}
    \caption{Performance of PRADA under different bandwidth $B$.}
    \label{fig:performance_vs_B}
\end{figure}

The ablation baselines reveal a distinct behavior in the extremely low-bandwidth regime. The random scheduler achieves relatively high accuracy when $B$ is very small. However, this comes at the cost of dramatically increased latency(Fig.~\ref{fig:delay_vs_B}). This is because it continues to offload requests without accounting for bandwidth scarcity, allowing some difficult tasks to benefit from LLM processing, but incurring excessive transmission and waiting overhead. Therefore, its higher accuracy does not translate to better overall system performance. In contrast, PRADA adopts a more conservative strategy under severe bandwidth constraints, achieving a more balanced accuracy-delay trade-off.

Figure~\ref{fig:delay_all_B} further shows the delay decomposition. Communication delay decreases rapidly with $B$ and becomes negligible in the moderate regime. In contrast, the processing and queuing delays of PRADA increase from the very low-bandwidth regime and then stabilize. This is because higher bandwidth enables more requests to be offloaded, increasing both server-side computation and contention for limited parallel capacity. As a result, the system bottleneck shifts from communication to computation and queuing as $B$ increases.

Overall, $B$ exhibits a step-like transition behavior. When bandwidth is below a critical point, communication dominates system performance and severely limits effective offloading. Once $B$ exceeds this critical threshold, the communication constraint becomes inactive, and further increases provide only marginal benefit. This suggests that bandwidth provisioning should be coordinated with server parallel capacity rather than scaled independently.

\section{Conclusion}
\label{sec:conclusion}
This work addressed the fundamental tension between reasoning quality and latency in dynamic edge networks where heterogeneous agents must collaborate under stochastic arrivals and coupled resource constraints.
The key insight from our PRADA framework is that the PRM, ordinarily a prohibitive online burden, can be confined entirely to offline training, distilling its long-horizon quality assessment into a compact network that incurs negligible runtime cost yet preserves the ability to identify steps that genuinely profit from stronger reasoning.

Several structural lessons emerge. 
First, by proving that a first-stage decision to remain local retains its optimality even after communication and queuing costs are reintroduced, we show that the edge screening and server scheduling stages can be designed and trusted independently, greatly simplifying system architecture. 
Second, the simulation results reveal sharp threshold phenomena: both server parallel capacity and total bandwidth exhibit critical values beyond which quality and latency gains quickly saturate, and the dominant bottleneck shifts from queuing to computation or from communication to contention. These saturation points provide concrete guidance for cost-effective provisioning: one should jointly scale computation and communication to a moderate level rather than over-provisioning either dimension in isolation.

The implications of PRADA extend beyond the specific setting studied. The idea of using capabilities-rich but computationally expensive teacher models exclusively offline to supervise lightweight online orchestrators is broadly applicable to other resource-constrained multi-agent systems.

\appendices

\section{Proof of Proposition \ref{thm:1}}
\label{sec:AppA}
For a local decision ($a_{i,t}=0$), the two value functions coincide because no communication or queuing is incurred:
\begin{equation}
\widetilde{Q}_{i,t}(0) = Q_{\pi_\theta}(x_{i,t_{K_i}}^{(K_i)}, 0).
\end{equation}
For an offloading decision ($a_{i,t}=1$), the true reward subtracts additional latency penalties that are absent in the coarse reward:
\begin{equation}
\widetilde{Q}_{i,t}(1) = Q_{\pi_\theta}(x_{i,t_{K_i}}^{(K_i)}, 1) - \beta(\mathrm{TC}_{i,t} + \mathrm{TQ}_{i,t} ),
\end{equation}
where the expectation is taken over channel conditions and queue states. Since $\beta \ge 0$ and $\mathrm{TC}_{i,t}, \mathrm{TQ}_{i,t} \ge 0$, we have
\begin{equation}
\widetilde{Q}_{i,t}(1) \le Q_{\pi_\theta}(x_{i,t_{K_i}}^{(K_i)}, 1).
\end{equation}

Assume $Q_{\pi_\theta}(x_{i,t_{K_i}}^{(K_i)}, 1) \le Q_{\pi_\theta}(x_{i,t_{K_i}}^{(K_i)}, 0)$. Then
\begin{align*}
\widetilde{Q}_{i,t}(1)
\le Q_{\pi_\theta}(x_{i,t_{K_i}}^{(K_i)}, 1) \
\le Q_{\pi_\theta}(x_{i,t_{K_i}}^{(K_i)}, 0) \
= \widetilde{Q}_{i,t}(0).
\end{align*}
Thus $\widetilde{Q}_{i,t}(1) \le \widetilde{Q}_{i,t}(0)$, so the local action is at least as good as offloading under the true objective as well.

\section{Proof of Theorem~\ref{thm:stage2_policy}}
\label{sec:AppB}

For fixed multipliers $\lambda_s, \mu$, the Lagrangian separates into per‑request terms. The optimal action for request $i$ is the one that maximizes
\[
\widetilde{Q}_{i,t}(\alpha_{i,t}) - \lambda_s \mathbb{I}(\alpha_{i,t}=2) - \mu B_{i,t}\mathbb{I}(\alpha_{i,t}=2).
\]
Using the definition of $\widetilde{Q}_{i,t}(\alpha_{i,t})$ from \eqref{eq:modified_q_stage2}, we compare the three candidates $\alpha_{i,t}=0,1,2$.

\textit{Case I: Competitive admission ($|\mathcal{O}_t| > M - |\mathcal{M}_t|$).} Here, multiple requests compete for limited server slots. Immediate admission ($\alpha_{i,t}=2$) is optimal if:
\begin{equation}
\left\{
\begin{aligned}
\widetilde{Q}_{i,t}(2)-\lambda_s-\mu B_{i,t} &\ge \widetilde{Q}_{i,t}(1),\\
\widetilde{Q}_{i,t}(2)-\lambda_s-\mu B_{i,t} &\ge \widetilde{Q}_{i,t}(0).
\end{aligned}
\right.
\end{equation}
By combining Eq.~\eqref{eq:modified_q_stage2} and simplifying, we obtain:
\begin{equation}
\left\{
\begin{aligned}
&\beta\,\mathrm{TQ}_{i,t} - \mu B_{i,t}\ge \lambda_s,\\
&Q_{\pi_\theta}\bigl(x_{i,t}^{(k_{i,t})},1\bigr)
  - Q_{\pi_\theta}\bigl(x_{i,t}^{(k_{i,t})},0\bigr) -\beta\,\mathrm{TC}_{i,t}-\mu B_{i,t}\ge \lambda_s.
\end{aligned}
\right.
\end{equation}
Substituting $w_{i,t}$ and $\overline{w}_{i,t}$ into the conditions, we obtain:
\begin{equation}
\left\{
\begin{aligned}
&\overline{w}_{i,t}\ge \lambda_s,\\
&w_{i,t}\ge \lambda_s.
\end{aligned}
\right.
\end{equation}

By analogy, queue admission ($\alpha_{i,t}=1$) is optimal if:
\begin{equation}
\left\{
\begin{aligned}
\widetilde{Q}_{i,t}(1) &> \widetilde{Q}_{i,t}(2)-\lambda_s-\mu B_{i,t},\\
\widetilde{Q}_{i,t}(1) &\ge \widetilde{Q}_{i,t}(0),
\end{aligned}
\right.
\end{equation}
and local return ($\alpha_{i,t}=0$) is optimal if:
\begin{equation}
\left\{
\begin{aligned}
\widetilde{Q}_{i,t}(0) &> \widetilde{Q}_{i,t}(2)-\lambda_s-\mu B_{i,t},\\
\widetilde{Q}_{i,t}(0) &> \widetilde{Q}_{i,t}(1).
\end{aligned}
\right.
\end{equation}
By applying the same simplification to the remaining two cases, we obtain:
\begin{equation}
\alpha_{i,t}^* =
\begin{cases}
1, & \overline{w}_{i,t} < \lambda_s \text{ and } w_{i,t} \ge \overline{w}_{i,t}, \\[1mm]
0, & w_{i,t}<\lambda_s \text{ and }w_{i,t} < \overline{w}_{i,t}.
\end{cases}
\end{equation}
In conclusion, we gain the optimal action structure:
\begin{equation}
\alpha_{i,t}^* =
\begin{cases}
2, & \min(\overline{w}_{i,t}, w_{i,t}) \ge \lambda_s, \\[1mm]
1, & \overline{w}_{i,t} < \lambda_s \text{ and } w_{i,t} \ge \overline{w}_{i,t}, \\[1mm]
0, & w_{i,t}<\lambda_s \text{ and }w_{i,t} < \overline{w}_{i,t}.
\end{cases}
\label{eq:case1_rule}
\end{equation}

\textit{Case II: Fully occupied server ($|\mathcal{M}_t| = M$).} When no immediate slots are available, the decision reduces to a binary choice between queuing and local fallback. Queue admission ($\alpha_{i,t}=1$) is optimal if:
\[
\widetilde{Q}_{i,t}(1) \ge \widetilde{Q}_{i,t}(0)\implies w_{i,t}\ge \overline{w}_{i,t}.
\]

The threshold structure remains: 
\begin{equation}
\alpha_t^{(i)*}
=
\begin{cases}
1, & w_{i,t}\ge \overline{w}_{i,t},\\[1mm]
0, & \text{otherwise}.
\end{cases}
\label{eq:case2_rule}
\end{equation}

\textit{Case III: No admission competition ($|\mathcal{O}_t| \le M - |\mathcal{M}_t|$).} 
When server capacity suffices for all candidates, each request is either immediately admitted or returned locally. 
Immediate admission ($\alpha_{i,t}=2$) is optimal if
\[
\widetilde{Q}_{i,t}(2) - \lambda_s - \mu B_{i,t} \ge \widetilde{Q}_{i,t}(0) 
\implies w_{i,t} \ge \lambda_s,
\]
Since $|\mathcal{O}_t| \le M-|\mathcal{M}_t|$, the server concurrency constraint is non-binding in this case. By the complementary slackness condition of the KKT system,
\[
\lambda_s\!\left(\sum_{i\in\mathcal{O}_t}\mathbb{I}(\alpha_{i,t}=2)-M+|\mathcal{M}_t|\right)=0,
\]
the associated dual variable must satisfy $\lambda_s=0$. Therefore, the immediate-admission condition reduces to $w_{i,t}\ge 0$.
Then, we can obtain the optimal threshold structure:
\begin{equation}
\alpha_{i,t}^* =
\begin{cases}
2, & w_{i,t} \ge 0, \\[1mm]
0, & \text{otherwise},
\end{cases}
\label{eq:case3_rule_new}
\end{equation}

Combining these three cases, we conclude that the optimal Stage~2 action for each request exhibits a threshold-based structure with respect to the effective offloading gain, as stated in the theorem.

\bibliographystyle{IEEEtran}
\bibliography{References}

\end{document}